\def\ontop#1#2{\setbox0\hbox{#2}\copy0\llap{\raise\ht0\hbox{#1}}}
\theoremstyle{plain}
\newtheorem{thm}{Theorem}
\newtheorem{prp}{Proposition}
\theoremstyle{definition}
\newtheorem{defn}{Definition}
\theoremstyle{remark}
\renewcommand*\env@matrix[1][\arraystretch]{%
  \edef\arraystretch{#1}%
  \hskip -\arraycolsep
  \let\@ifnextchar\new@ifnextchar
  \array{*\c@MaxMatrixCols c}}
\title{Track selection in Multifunction Radars:\\ Nash and correlated equilibria}
 \author{Nikola~Bogdanovi\'{c},  
        Hans~Driessen, 
        and~Alexander~Yarovoy
        \thanks{Nikola~Bogdanovi\'{c}, Hans~Driessen and Alexander~Yarovoy are with the Microwave Sensing,
        	Systems and Signals (MS3) group, Delft University of Technology, Mekelweg
        	4, 2628CD Delft, The Netherlands. E-mails: \{N.Bogdanovic, J.N.Driessen, A.Yarovoy\}@tudelft.nl. (Nikola Bogdanovi\'{c}, as corresponding author).}
\thanks{ This research has been partly supported by the SOS project funded by the European Commission (FP7 GA no. 286105).
}}
\algnewcommand\INPUT{\item[\textbf{Input:}]}%
\algnewcommand\OUTPUT{\item[\textbf{Output:}]}%
\algnewcommand\phase{\item[\textbf{Pahse 1:}]}
\algnewcommand\INI{\item[\textbf{Initial state:}]}
\algnewcommand\prop{\item[\textbf{Proposed Graph Formation for NSPE Algorithm:}]}
\algnewcommand\phaseI{\item[\textit{\-\hspace{0.1cm} \underline{Phase I} - Adaptive coalitional graph formation:}]}
\algnewcommand\phaseII{\item[\textit{\-\hspace{0.1cm} \underline{Phase II} - NSPE diffusion strategy:}]}
\algnewcommand\period{\item[\textbf{Repeat periodically:}]}
\begin{document}
%
\maketitle
\begin{abstract}

We consider a track selection problem for multi-target tracking in a multifunction radar network from a game-theoretic perspective. The problem is formulated as a non-cooperative game. 
The radars are considered to be players in this game with utilities modeled using a proper tracking accuracy criterion
and their strategies are the observed targets whose number is known. Initially, for the problem of coordination, the Nash equilibria are characterized and, in order to find equilibria points, a distributed algorithm based on the best-response dynamics is proposed. Afterwards, the analysis is extended to the case of partial target observability and radar connectivity and 
heterogeneous interests  among radars. The solution concept of correlated equilibria is employed and a distributed algorithm based on the regret-matching is proposed.
The proposed algorithms are shown to perform well compared to the centralized approach of significantly higher complexity.

\end{abstract}
\begin{IEEEkeywords}
	Radar management, multiple target tracking, track selection, noncooperative games, coordination, Nash equilibrium, correlated equilibrium, regret-matching.
\end{IEEEkeywords}
%


\section{Introduction }

Radar networks that employ multiple, distributed stations have attracted a lot of attention due to the improvements in tracking and detection performance they may offer over conventional, standalone radars. 
Furthermore, 
recent advances in sensor technologies enabled a large number of controllable degrees of freedom in modern radars. 
One such system is the 
Multifunction Radar (MFR), and it typically employs phased array antennas that allow
the radar beam to be controlled 
almost instantaneously~\cite{hero2011sensor}\nocite{sabatini_tarantino_1994}-\cite{richards_scheer_holm_2010}. Thus, the MFR is much more flexible than conventional, dedicated radars 
by being capable of performing multiple
functions simultaneously - volume surveillance, fire control, and multiple target tracking to name a few.
%
In this paper, we focus on the latter function~\cite{blackman1999design}\nocite{bar1990multitarget}\nocite{mallick2012integrated}\nocite{Mahler_book_2007}-\cite{Waveform_agile_4775880}; specifically, each MFR radar performs the track filtering of several targets.

The aforementioned flexibility introduces a need
to effectively manage available radar resources 
to achieve specified objectives while conforming to operational and technical constraints~\cite{Ding_4564804},~\cite{katsilieris2015sensor}. %
%
%
Even for a standalone 
MFR, the radar resource management plays a crucial role. 
Most of the existing approaches to MFR radar resource management 
roughly fit into the following two categories~\cite{charlish2011autonomous}\nocite{Fotios_AES_7376216}-\cite{narykov2013algorithm}. 
%
The first category consists of the rule-based techniques~\cite{van1993phased}\nocite{koch1999adaptive}-\cite{coetzee2005multifunction}, which control the resource allocation parameters indirectly, under low computational burden
. 
The main drawback of these techniques is that it is hard to say what performance 
can be achieved since it highly depends on the application scenario and on the sensors being deployed. 
%
The other category is related to the methods that formulate the problem as an optimization one; and thus, they may achieve the optimal performance, 
see~\cite{hero2011sensor},~\cite{Waveform_agile_4775880},~\cite{Hans_1591902}\nocite{hansen2006resource}-\cite{Djonin_4915772} and the references therein.
In the network setting, which is  the focus of this paper, the first category of approaches is difficult to be extended, while the second one may involve excessive complexity due to the network dimension~\cite{Covariance_control_Kal_1145739}-\cite{Griffiths_7527905}.
Thus, to reduce such complexity, one may aim to find, in either 
centralized or distributed way, a close-optimal solution to the radar management problem that is considered, see e.g.,~\cite{Griffiths_7527905}\nocite{Severson_AEES_6965777}-\cite{Zhi_Tian_AEES_6324736}.
%
%
In this work, we propose a distributed approach based on game theory 
so as to 
model track selection for multi-target 
track filtering in an MFR network.

Game theory is the mathematical study of conflict and cooperation between intelligent rational decision-makers~\cite{shoham2008multiagent}. 
Apart from economics and political sciences,
over the last decade  game theory (GT) is being applied to 
control, signal processing and wireless communications, mainly due to the issues dealing with networking~\cite{Marden_4814554}\nocite{Han_2012_book}\nocite{bacci_2015game}\nocite{Saraydar_983324}\nocite{Larsson_4604732}\nocite{scutari2008competitive}\nocite{ch1_j_Feleg4907463}
-\cite{GT_icassp2015}. 
More recently, GT has been applied to solve certain radar problems,  mostly related to the 
multiple input multiple output (MIMO) radar networks. 
%
For instance, the problem of waveform design has been investigated~\cite{g_2012_game}\nocite{Nehorai2016jointly}-\cite{code_design_piezzo2013non};  in~\cite{g_2012_game} 
by formulating a two player zero-sum (TPZS) game 
between the radar design engineer and an opponent, 
and in~\cite{code_design_piezzo2013non} by a potential game 
in which the radars choose among the pre-fixed transmit 
codes. 
%
%
%
Next, the interaction between a smart target and a MIMO radar has also been modeled as a TPZS 
game~\cite{Jammer_6025317}. 
%
Furthermore, the problem of transmission power 
control 
was addressed 
by using non-cooperative GT and Nash equilibria in~\cite{Bacci_6250454} 
 and  by employing a coalitional game theoretic solution concept called the Shapley value in~\cite{Chen_coop_game_7104065}. 
Although not dealing with radar management, a useful work in~\cite{chavali2013concurrent} related to the multi-target tracking application uses correlated equilibrium to solve the data association problem at a single radar. 
Finally, the works in~\cite{charlish2011autonomous},~\cite{charlish2012multi}-\cite{Charlish_AES_7272863} utilize a market mechanism, called the continuous double auction, in order to choose the global optimum parameters for each individual task given the global (finite) resource constraint. 
The method provided a superior performance over its competing
heuristic-based algorithms; however, its main drawback is in the
implementation complexity~\cite{Griffiths_7527905}.

In this paper we apply game theory to multi-target track filtering in an MFR network and extend the initial results from~\cite{track_selection_7472254}. %
The main contributions of this article are 
the following ones: 
\begin{itemize}
	\item A new formulation of the track selection problem for a multi-target tracking scenario in a resource-limited MFR network using the non-cooperative games is proposed. 
	\item The 
	track selection %
	problem is analyzed using the Nash equilibria of the underlying coordination game for the setting with full target observability and radar connectivity as well as the homogeneous interests (target priorities) of radars. Also, to solve the  
	problem in a distributed manner, a low-complexity algorithm based on the best-response dynamics is proposed.
	\item The 
	track selection %
	problem is extended to the case of partial target observability and radar connectivity and 
	heterogeneous interests  among radars. Due to the particularities of this case, the solution concept of correlated equilibria is employed and a distributed algorithm based on the regret-matching is proposed.
\end{itemize}

The structure of the paper is as follows. Section~\ref{Sec_2_GT_overview} provides some background on game theory and the solution concepts employed in this paper. The problem formulation is given in Sec. \ref{Sec_3_Problem_form}. 
%
 Next, Sections~\ref{coordination}
 -\ref{coord_CONflict} deal with the analyses of the scenarios where the observability and connectivity conditions as well as radar interests are being equal and heterogeneous, respectively. 
%
%
Specifically, in Sec. \ref{coordination} the former scenario 
 is modeled 
 as a coordination game, its Nash equilibria are characterized in terms of their existence conditions and efficiency, and a distributed algorithm based on the best-response dynamics is proposed. 
 On the other hand, Section~\ref{coord_CONflict} 
 provides a distributed algorithm tracking the set of correlated equilibria points.  
 In Section~\ref{simulat}, the 
 effectiveness of the proposed algorithms is 
 demonstrated via computer simulations. 
 Finally, Section~\ref{sec:conc} summarizes the work.

\begin{figure}
	\centering
	\includegraphics[width=0.725\linewidth]{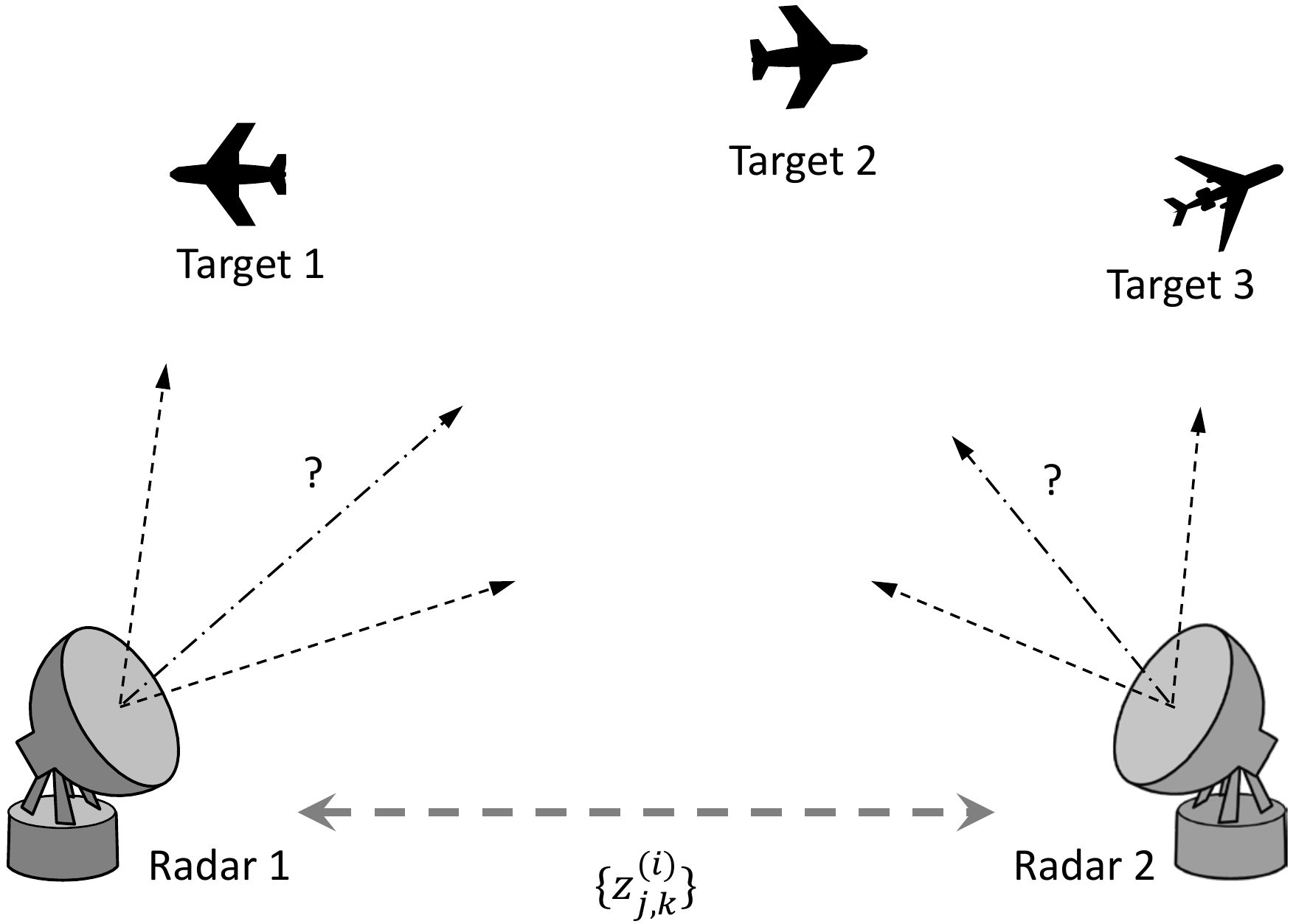}
	\caption{A track selection problem in multi-target tracking.}
	\label{fig:radar_network_1_1}
\end{figure}

\section{Brief preliminaries on game theory}
\label{Sec_2_GT_overview}

In this section, we provide notation and recall some formal definitions and solution concepts
related to game theory that will be used throughout the paper.
The focus is put on noncooperative game theory, the dominant branch of game theory, and specifically
on so-called normal-form games~\cite{shoham2008multiagent}.

\begin{defn}
A finite, $N$-person normal-form game is a tuple $\Gamma=(\mathcal{N}, \mathcal{S}, u)$, where:
\begin{itemize}
	\item $\mathcal{N}$ is a finite set of $N$ players.
	\item $\mathcal{S} = \mathcal{S}_1 \times \cdots  \times \mathcal{S}_N$, where $\mathcal{S}_i$ is a finite set of actions (strategies) available to player $i$, $\forall i \in \mathcal{N}$. Each vector $s = (s_1, \ldots , s_N) \in \mathcal{S}$ is called an action (strategy) profile.
	\item $u = (u_1, \ldots , u_N)$ where $u_i \colon \mathcal{S} \rightarrow \mathbb{R}$ is a real-valued utility (or payoff) function for  player  $i$, $\forall i \in \mathcal{N}$ .
\end{itemize}
\end{defn}

To reason about multiplayer games, one can rely on solution concepts, i.e., principles according to which interesting outcomes of a game can be identified. 
Some fundamental concepts, which will be used throughout this paper, are described in the sequel.
A basic and the most widely accepted one is the celebrated Nash equilibrium. %
Formally, in case where players make deterministic choices (pure strategies)
the Nash equilibrium is defined as follows%
~\cite{shoham2008multiagent}.
 
\begin{defn}
 A strategy profile $s = (s_1, \ldots , s_N)$ is a pure-strategy \textit{Nash
equilibrium} if, for all players $i$  and for all strategies $s^{\prime}_i \neq s_i$, it holds that 
$$u_i (s_i, s_{-i}) \geq  u_i (s^{\prime}_i, s_{-i}) ,$$
where $s_{-i} = (s_1, \ldots , s_{i-1}, s_{i+1}, \ldots , s_N)$ is defined as a strategy
profile $s$ without player $i$'s strategy. 
\end{defn}
Otherwise stated, a Nash
equilibrium (NE) is a state of a non-cooperative game where no player can unilaterally improve its utility by taking a different strategy, if the other players remain constant in their strategies.  

Next, we define the concepts of Pareto domination and Pareto optimality.

\begin{defn} 
Strategy profile $s$  \textit{Pareto dominates} strategy profile $s^{\prime}$ if  $\forall i \in \mathcal{N}$, $u_i(s) \geq u_i(s^{\prime})$, and there exists some $j \in \mathcal{N}$ for which $u_j(s) >  u_j(s^{\prime})$. Also, strategy profile $s$ is  \textit{Pareto optimal} 
if there does not exist another strategy profile $s^{\prime} \in \mathcal{S}$ that Pareto dominates $s$.
\end{defn}

To evaluate the (in)efficiency  of NE there is a notion called the price of anarchy, which is defined as the ratio of a centralized solution to the worst-case equilibrium in terms of 
the utility sum that is in economics literature 
known as "social welfare".
\begin{defn} 
	The price of anarchy (PoA) is given as
	 \begin{equation*}\label{eps:CE_def_1} 
{\rm PoA} = {\max\limits_{s\in \mathcal{S}} \, \,\sum_{i \in \mathcal{N}} u_i(s)\over \min\limits_{s\in \mathcal{S}^{\rm NE}} \, \,\sum_{i \in \mathcal{N}} u_i(s)} \, ,
\end{equation*}
where $\mathcal{S}^{\rm NE}$ is the set of Nash equilibria of the game.
\end{defn}
Note that in case where the equilibria are fully efficient, the PoA is equal to 1.

Finally, we define 
the notion of correlated equilibrium, which is a generalization of Nash equilibrium~\cite{Aumann197467}, \cite{hart2013simple_book}. 
\begin{defn}   A \textit{correlated equilibrium} consists of a probability vector\footnote{A probability vector is a vector whose coordinates are all nonnegative and sum up to 1.} $\pi$ on $\mathcal{S}$ such that the following is satisfied,  $\forall i \in \mathcal{N}$ and $\forall s_{i}, s^{\prime}_i \in \mathcal{S}_i$:
 \begin{equation*}\label{eps:CE_def_1} 
 \sum\limits_{s_{-i}\in \mathcal{S}_{-i}} \pi(s_{i}, s_{-i})[u_{i}(s_{i}, s_{-i})-u_{i}(s_{i}^{\prime}, s_{-i})]\geq 0. 
 \end{equation*}
\end{defn}

To interpret the inequality above, let us first divide it by the marginal probability $ \pi(s_{i})$, which yields:
 \begin{equation*}\label{eps:CE_def_2}
  \sum\limits_{s_{-i}\in \mathcal{S}_{-i}} \pi(s_{-i} \vert s_{i})[u_{i}(s_{i}, s_{-i})-u_{i}(s_{i}^{\prime}, s_{-i})]\geq 0.
 \end{equation*}
Thus, an intuitive interpretation of correlated equilibrium is as follows: Suppose that a strategy profile $s \in \mathcal{S}$ is chosen at random, e.g., by some virtual referee, according to the joint distribution $\pi$.  Each player $i$ is then given, by the "referee", its own recommendation $s_i$. The inequality above means that player $i$ cannot obtain a higher expected utility by selecting strategy $s_{i}^{\prime}$ instead of the "recommended" one, i.e., $s_{i}$. %
Also, in every finite game, the set of correlated equilibria is nonempty, closed and convex. 


\section{Problem formulation section}%
\label{Sec_3_Problem_form}

Let us consider a network of MFR radars which aims at tracking several targets, e.g., see Fig.~\ref{fig:radar_network_1_1}. Let $\mathcal{N}$ denote the set of $N$ radars and $\mathcal{T}$  denote the set of $T$ targets. We consider that the position of each radar node\footnote{In this paper, we use the terms radar and node interchangeably.} $i \in \mathcal{N}$ is 
 known. 
Although there are works in the tracking literature that consider unknown number of targets, e.g.,~\cite{mallick2012integrated}, \cite{Mahler_book_2007}, in this work we focus on the case where the number of targets at each time instant is 
known. 
%
%
%
The current positions of targets are assumed to be known 
approximately. 
Also, the targets are supposed to be well-separated; thus the data association problem is trivial and different transmission beams are required so as to illuminate distinct targets.
Furthermore, assume that there is no 
central processing node to perform track filtering; 
in other words, 
fusion is done at each radar node.

Next, the 
dynamics of each target $j \in \mathcal{T}$, at each discrete time $k$, are represented using the so-called white noise constant velocity model~\cite{blackman1999design},~\cite{Hans_1591902} given by 
\begin{align} 
	\mathrm{x}_{j,k}  &= F\cdot \mathrm{x}_{j, k-1} + w_{j, k-1} \\ 
	z_{j,k}^{(i)} &=  h^{(i)}_j(\mathrm{x}_{j,k}) + \nu^{(i)}_{j,k}
\end{align}
where
\begin{itemize}
	\item[-] the state vector $ \mathrm{x}$ for each target $j$ is comprised of the two dimensional coordinates and velocity\footnote{Although here we assume a two-dimensional case, the extension to a three-dimensional case is straightforward.}, i.e., $\mathrm{x}_j=[ x_j ,y_j , v_{j,x}, v_{j,y}]^T$ where $[\cdot]^T$ stands for the transposition of the argument,
	\item[-] $F$ is a $4\times4$ matrix corresponding to the deterministic target dynamics given as 
	\begin{align}
	F=\left[\begin{smallmatrix} 
	1 & t_u \\
	0 & 1
	\end{smallmatrix}\right] \otimes I_2
	\end{align} 
	with $\otimes$ being the Kronecker product, $I_2$ stands for a $2\times2$ identity matrix and $t_u$ is the update time that is fixed,
	\item[-] the process noise $w$ is Gaussian with zero mean and covariance 
	\begin{align}
	Q=\sigma^2_w \cdot \left[\begin{smallmatrix} 
	t_u^3/3 & t_u^2/2 \\
	t_u^2/2 & t_u
	\end{smallmatrix}\right] \otimes I_2 
	 \end{align}
	where $\sigma^2_w$ models maneuverability,
	\item[-] the measurement vector $z_{j,k}^{(i)}$, at each radar $i \in \mathcal{N}$, consists of range and azimuth, i.e., $z_{j,k}^{(i)}=\left[r_{j,k}^{(i)}  , a_{j,k}^{(i)}\right]^T$,
	\item[-] the nonlinear transformation  $h^{(i)}_j(\mathrm{x}_{j})$ is given by 
	\begin{align}
	h^{(i)}_j(\mathrm{x}_{j})=\begin{bmatrix} 
	\sqrt{(x_j-x_i)^2 + (y_j-y_i)^2} \\
	\mathrm{arctan}((y_j-y_i)/(x_j-x_i)) \end{bmatrix} , 
	\end{align}
	\item[-] the measurement noise $\nu^{(i)}_{j}$ is zero-mean Gaussian with covariance $R_{j,i}=\mathrm{diag}\left\{[\sigma^{(i)}_{r_j}]^2,[\sigma^{(i)}_{a_j}]^2\right\}$.
\end{itemize}

The radars have limited time budget in the sense that they cannot take measurements of %
all targets 
during the same time slot. 
The number of measurements per scan that each radar can make is given by $m < \vert\mathcal{T}\vert$. 
Since there is no central entity that may coordinate actions of the radars, 
a distributed solution is needed. %

Furthermore, radars may experience different target observability conditions; thus, the set of the targets that are observable at each radar $i$ is denoted by $\mathcal{T}_i$, and it satisfies $\mathcal{T}_i \subseteq \mathcal{T}$.
The interaction among the radars is existing but limited to sharing 
the measurements $\{z_{j,k}^{(i)}\}$ 
related to the  previously selected targets. 
The communication neighborhood of any particular radar $i$, together with radar $i$, is denoted as $\mathcal{N}_i$, where $\mathcal{N}_i \subseteq \mathcal{N}$.
The number of transmissions each radar $i$ collects from its neighborhood $\mathcal{N}_i$, and which are related to some target $j \in \bigcup_{i \in \mathcal{N}_i}\mathcal{T}_i $, 
is denoted as 
$m^t_j(i)$. %
For notational simplicity, in the rest of this section we drop the index $j$ for targets where no confusion is possible.

At each radar $i$ and for each target $j$, the tracking process is performed by 
an Extended-Kalman Filter (EKF). Firstly, the prediction step occurs, i.e.,
\begin{align}
\mathrm{x}_{k\vert k-1}  &= F\cdot \mathrm{x}_{k-1\vert k-1}  \label{eq_predict_1}\\ 
P_{k\vert k-1} &=  F P_{k-1\vert k-1} F^T + Q  \label{eq_predict_2}
\end{align}
where $\mathrm{x}_{k\vert k-1}$ and $P_{ k\vert k-1}$ are the state estimate and the error covariance matrix for time step $k$ given all measurements till time step $k-1$. 
Then, the updating step takes place where each available measurement for target $j$ of some radar $n\in \mathcal{N}$ 
is used in a cyclic manner. In particular, for  each $p\in \{1,\ldots, m_j^t(n)\}$, 
\begin{align}
K_{k}^{(p)}  &= P_{k\vert k}^{(p-1)} [H^{(p)}_{k,n}]^T \left(H^{(p)}_{k,n} P_{k\vert k}^{(p-1)} [H^{(p)}_{k,n}]^T  +R_{n}\right)^{-1}   \label{eq_update_1}\\ 
\mathrm{x}_{k\vert k}^{(p)}  &= \mathrm{x}_{k\vert k}^{(p-1)} +K_{k}^{(p)} \left( z_{k}^{(n)} - h^{(n)}\left(\mathrm{x}_{k\vert k}^{(p-1)}\right)\right) \label{eq_update_2}\\ 
P_{k\vert k}^{(p)}  &=  \left(I- K_{k}^{(p)} H^{(p)}_{k,n}\right) P_{k\vert k}^{(p-1)} \label{eq_update_3}
\end{align}
where $P_{k\vert k}^{(p)}$ denotes the error covariance matrix after $p$ incremental updates at the same time step $k$, with 
$P_{k\vert k}^{(0)}=P_{k\vert k-1}$ and $\mathrm{x}_{k\vert k}^{(0)}=\mathrm{x}_{k\vert k-1}$.
The linearized measurement matrix of radar $n$ at time $k$ is $H^{(p)}_{k,n}= \partial h^{(n)}/ \partial \mathrm{x}$ evaluated at $\mathrm{x}_{k\vert k}^{(p-1)}$.
Note that, due to the fact that the coordinates $(x_n, y_n)$ of each radar $n \in \mathcal{N}$ are  
known, the radars do not need to exchange $\{H_{k,n}\}$ matrices in order to implement the algorithm above.

	In the following, we study a natural game theoretic variant of this problem. Specifically, we assume that the radars are 
autonomous decision-makers interested in optimizing their own tracking performance. In other words, the selections of each radar are autonomous in the sense that there is no entity to tell radars what to do in a hierarchical type of structure, nor is there any negotiation among radars. We analyze two indicative scenarios with respect to the observability conditions, communication topology as well as the radars' interests:
\begin{itemize}
\item[i)] \textit{Scenario 1}:  a scenario where 
each radar $i$ observes all targets, i.e., $\bigcap_{i \in \mathcal{N}}\mathcal{T}_i=\mathcal{T}$, 
 communicates with all neighbors (all radars communicate through the full graph), i.e., $\mathcal{N}_i = \mathcal{N}$, and is interested in tracking all targets in $\mathcal{T}$ (all targets have the same importance). 
\item[ii)] \textit{Scenario 2}: a more general scenario where the radars do not necessarily have the same target interests and where the observability and communication equalities above (full observability and full connectivity conditions) do not need to hold, i.e., $\exists i \in \mathcal{N} \mid \mathcal{N}_i \subset\mathcal{N}  \lor \mathcal{T}_i \subset\mathcal{T}  $.
\end{itemize}
%
%
%
For both scenarios, the fact that each radar (or the radar operator) autonomously and rationally decides to track the targets that increase its utility can be modeled as a 
one-stage non-cooperative game in normal form, which is the most fundamental representation type in game theory~\cite{shoham2008multiagent}. In following two sections, we analyze the track selection problem in each scenario separately. 


\section{Scenario 1: the problem of coordination}
\label{coordination}

Firstly, note that there are many classes of normal-form games; however, due to the particularities of the scenario considered, in this section we focus on coordination games, which do not rest solely upon conflict among players. Instead, as their name suggests, more emphasis is put on the coordination issue where players may have an incentive to conform with or to differ from what others do. In the latter case, this kind of games are usually called \textit{anti-coordination games}~\cite{shoham2008multiagent}, \cite{nisan2007algorithmic}-\cite{bramoulle2007anti}.

\subsection{Game-theoretic model}
\label{subsec_4_1_GT_model}

We assume that the players are rational and their objective is to maximize their payoff, i.e., the tracking accuracy of all targets. %
Formally, the 
track selection game $\Gamma^{(1)}=(\mathcal{N}, \mathcal{S}, u)$ has 
the 
subsequent components:
\begin{itemize}
	\item The players are the radars represented by the set $\mathcal{N}$.
	\item The strategy of each radar $i$ is represented by a $T$-tuple $s_i =(s_{i,1}, s_{i,2}, \ldots , s_{i,T})$ where $s_{i,j} =a$ 
	if radar $i$ devotes $a$ transmission beams to a target $j$, with $a\leq m$. Each strategy-tuple has at most $m$ transmissions, i.e., $\sum_{j=1}^{T} s_{i,j} \leq m$. 
	Also, note that $$m^t_j(i)=m^t_j=\sum_{i=1}^{N} s_{i,j}.$$ 
	\item The utility for each radar $i$ is given by
	\begin{gather}\label{eps:GT_model_eq1}
		\begin{split}
			u_i (s_i, s_{-i}) =     \sum_{j=1}^{T}  \mathrm{gain}_j(m^t_j) \, , 
		\end{split}
	\end{gather}
	where the term 
	$\mathrm{gain}_j(m^t_j)$ represents the  tracking accuracy gain for target $j \in \mathcal{T}$ and it is defined by
	\begin{equation}\label{eps:gain_eq}
		\mathrm{gain}_j (m^t_j) =
			\mathrm{Tr} \left\{ P_{j, k\vert k-1}  - P_{j, k\vert k}^{(m^t_j)} \right\} 
	\end{equation}
	where  $\mathrm{Tr}\{\cdot\} $ stands for the trace operator and  all radars are assumed to have the same initial guesses $\mathrm{x}_{j, 0\vert 0}$ and $P_{j, 0\vert 0}$.
\end{itemize}

In other words, the strategy of radar $i$ defines the number of transmissions per each target, at a given time slot, see Fig.~\ref{fig:example_matrix}.
Due to the fact that radars share their measurements
, their tracking accuracy gains for a specific target are 
dependent on all radars' measurements related to that target. 

\begin{figure}[!t]
	\centering
	\includegraphics[width=0.21\textwidth]{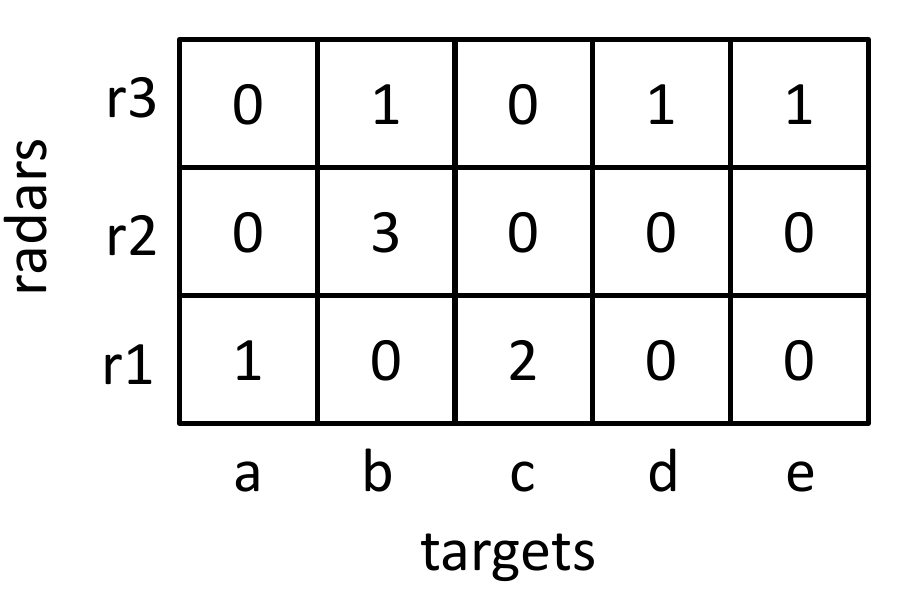}
		\caption{
			An 
			example strategy profile displayed as a matrix, for  
			$\mathcal{T}=\{a, \ldots, e\}$, and $\vert\mathcal{N}\vert=m=3$. }
	\label{fig:example_matrix}
\end{figure}

\begin{figure}[!t]
	\centering 
	\subfigure[]{
		\includegraphics[width=0.24\textwidth]{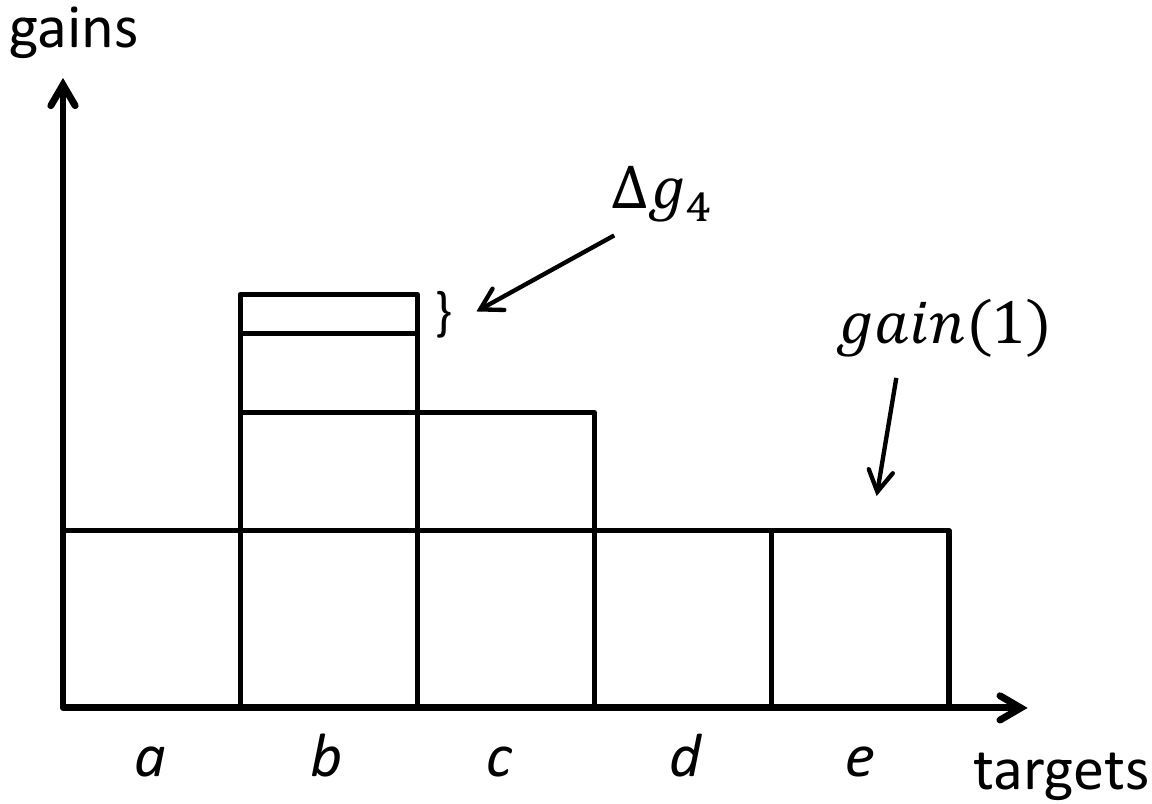}
		\label{fig:example1a}
	} \hspace{0.1cm}
	\subfigure[]{
		\includegraphics[width=0.24\textwidth]{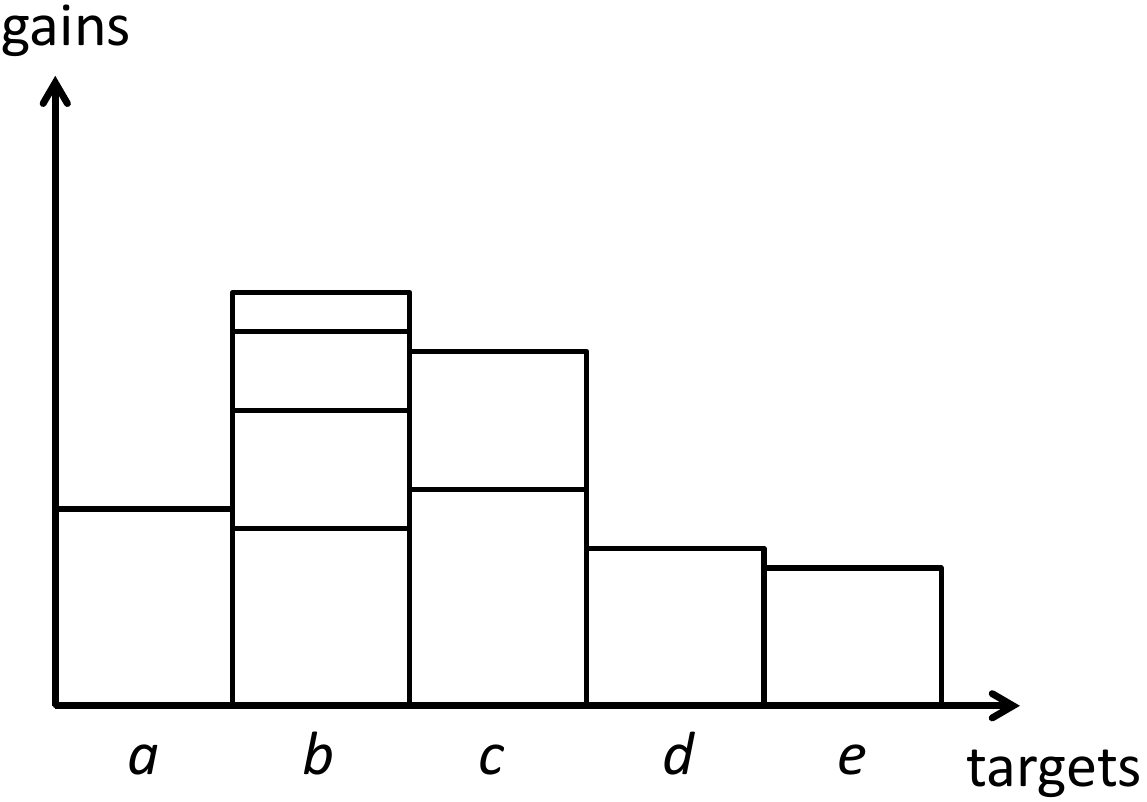}
		\label{fig:example1b}
	}
	\caption{An 
		example of a 
		track allocation in terms of gains per target where 
		$\mathcal{T}=\{a, \ldots, e\}$, and $\vert\mathcal{N}\vert=m=3$. 
		Each box represents a gain increment due to a measurement, and the number of measurements per target, $m^t_j$, varies between 1 and 4 across targets in $\mathcal{T}$
		. In case~\subref{fig:example1a}, the gains are equal for the same number of measurements, while in case~\subref{fig:example1b} they differ. 
	}  
	\label{fig:example1_1}
\end{figure}

Note that the gain in~\eqref{eps:gain_eq} can be expressed as 
\begin{align}
 \mathrm{gain}_j\left(m^t_j\right)= \sum_{p=1}^{m^t_j}  \Delta g^{(j)}_{p},
 \end{align}
where 
\begin{align}
\Delta g^{(j)}_{p}=\mathrm{Tr}\{ P_{j, k\vert k}^{(p-1)}  - P_{j, k\vert k}^{(p)} \}
\end{align}
and $\Delta g^{(j)}_{1}=\mathrm{gain}_j(1)$.
To analyze the proposed game, we proceed by adopting the following practical assumptions, for all $j \in \mathcal{T}$ and $p \in \{1, \ldots, m^t_j\}$:
\begin{itemize}
	\item \textit{Assumption 1}: the gain function in~\eqref{eps:gain_eq} is increasing in the number of measurements, i.e., $\Delta g^{(j)}_{p} > 0$, 
	\item \textit{Assumption 2}: estimation accuracy gain increment $\Delta g^{(j)}_{p}$ decreases as the order of measurements $p$ grows, i.e.,  $\Delta g^{(j)}_{p} > \Delta g^{(j)}_{p+1}$.
\end{itemize}

Finally, 
the following two cases are analyzed:
%
\begin{itemize}
	\item[a)] $\Delta g^{(j)}_{p}=\Delta g_{p}$,  
	for all $j \in \mathcal{T}$ and $p \in \{1, \ldots, m^t_j\}$
	\item[b)] $\Delta g^{(j)}_{p} \neq \Delta g^{(\ell)}_{p}$, 
	for $j \neq \ell$, 
	and $\mathrm{min}_{j \in \mathcal{T}}\Delta g^{(j)}_{p} >  \mathrm{max}_{j \in \mathcal{T}} \Delta g^{(j)}_{p+1}$.
\end{itemize}
Case a) represents an idealistic case where all nodes would have very similar measurements among themselves and related to all targets, see Fig~\ref{fig:example1_1}\subref{fig:example1a}. A more realistic scenario, corresponding to case b), is illustrated in Fig~\ref{fig:example1_1}\subref{fig:example1b}. 
In the next subsection, we characterize the Nash equilibria of the aforementioned cases.


\subsection{Nash equilibria}
\label{Sec_Nash}

Generally, in a coordination game, there are multiple NE. 
If the players have the same payoffs
, and the equilibria are equal, the game 
is a \textit{pure} coordination one.
In fact, in such a game, all NE are Pareto optimal. On the other hand, in a  \textit{ranked} one, 
the NE differ and usually there is only one Pareto optimal equilibrium~\cite{book_rasmusen_2001}.

Now, the main findings related to the NE for cases (a) and (b) are provided.

\begin{prp}
	\label{b_slucaj}
	The game for case (a) has PoA $=1$, and  any track assignment is a Nash equilibrium, if 
$\sum_{j=1}^{T} s_{i,j} = m$, and if
	\begin{itemize}
		\item  $m^t_j \leq 1$, $\forall j \in \mathcal{T}$, for a scenario where $N \cdot m \leq T$
		\item  $\mathrm{max}_{j, \ell \in \mathcal{T}} \{ \vert m^t_j - m^t_{\ell}\vert \} \leq 1$, $\forall j,\ell \in \mathcal{T}$, for a scenario where $N \cdot m > T$.
	\end{itemize} 
\end{prp}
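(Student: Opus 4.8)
The plan is to exploit the defining structural feature of Scenario~1 in case~(a): since the increments are common to all targets, $\Delta g^{(j)}_p=\Delta g_p$, and the payoff in~\eqref{eps:GT_model_eq1} depends on a profile only through the aggregate counts $m^t_j=\sum_i s_{i,j}$, every radar shares one and the same payoff $U(s)=\sum_{j=1}^{T}\mathrm{gain}_j(m^t_j)$. Hence $\Gamma^{(1)}$ is an identical-interest (common-payoff) game and the social welfare is simply $N\,U(s)$. I would first record a fact valid at any Nash equilibrium: by Assumption~1 each increment is strictly positive, so a radar leaving part of its budget unused could add one beam to any target and strictly raise its payoff; therefore $\sum_{j}s_{i,j}=m$ for every $i$, and consequently the total $\sum_{j}m^t_j=Nm$ is fixed.

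Next I would characterize the maximizers of $U$ subject to $\sum_j m^t_j=Nm$. Writing $\mathrm{gain}_j(m^t_j)=\sum_{p=1}^{m^t_j}\Delta g_p$ via~\eqref{eps:gain_eq}, the welfare equals the sum of the $Nm$ selected marginal increments; because $\Delta g_p$ is the same across targets and strictly decreasing in $p$ (Assumption~2), a greedy \emph{water-filling} argument shows this sum is largest when every target's first measurement is filled before any second, and so on. This makes the optimal counts as balanced as possible: when $Nm\le T$ the optimum assigns at most one beam per target ($m^t_j\le 1$), and when $Nm>T$ it forces $\max_{j,\ell}|m^t_j-m^t_\ell|\le 1$. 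These are precisely the stated conditions, so a profile satisfies the conditions if and only if it attains the optimal value $U_{\max}$. The sufficiency half of the claim is then immediate from the identical-interest property: if $s^\star$ maximizes $U$, then for any $i$ and any deviation $s_i'$ one has $u_i(s_i',s^\star_{-i})=U(s_i',s^\star_{-i})\le U(s^\star)=u_i(s^\star)$, so $s^\star$ is a Nash equilibrium.

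To obtain $\mathrm{PoA}=1$ I would prove the reverse inclusion: every Nash equilibrium already satisfies the balance conditions and is hence welfare-optimal. Arguing by contradiction, suppose some pair of targets violates balance at a NE, either $m^t_j\ge m^t_\ell+2$ (regime $Nm>T$) or some $m^t_j\ge 2$ with an empty target $\ell$ available (regime $Nm\le T$, where a pigeonhole count on $\sum_j m^t_j=Nm\le T$ guarantees such an $\ell$). Since $m^t_j\ge 1$, some radar $i$ has $s_{i,j}\ge 1$; moving one of its beams from $j$ to $\ell$ keeps its budget at $m$ and changes its payoff by $\Delta g_{m^t_\ell+1}-\Delta g_{m^t_j}$. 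The index inequality $m^t_\ell+1<m^t_j$ together with Assumption~2 renders this strictly positive, a profitable unilateral deviation contradicting the equilibrium property. Thus every NE meets the conditions, every NE attains $U_{\max}$, the worst-case equilibrium welfare equals $\max_s\sum_i u_i(s)=N\,U_{\max}$, and the PoA ratio collapses to $1$.

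The budget-exhaustion observation and the greedy characterization of the optimum are routine; the step carrying the real weight is the necessity argument, i.e. verifying that the single-beam rebalancing deviation is always available (a contributing radar exists, and an under-served target exists by the pigeonhole count when $Nm\le T$) and strictly profitable. The main obstacle is keeping the increment bookkeeping exact: one must track that removing the $m^t_j$-th beam costs $\Delta g_{m^t_j}$ while adding to $\ell$ earns $\Delta g_{m^t_\ell+1}$, and that strict concavity evaluated at the correct indices yields a \emph{strict} gain, which is what simultaneously rules out suboptimal locally balanced equilibria and delivers $\mathrm{PoA}=1$.
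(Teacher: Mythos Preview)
Your argument is correct and follows essentially the same route as the paper: both establish full budget use by Assumption~1, identify the balanced allocations via the decreasing-increment structure of Assumption~2, and invoke the identical-interest (pure anti-coordination) nature of case~(a) to conclude that every Nash equilibrium is welfare-optimal and hence $\mathrm{PoA}=1$. Your treatment is more explicit than the paper's in separating the sufficiency and necessity directions and in tracking the marginal-increment bookkeeping for the deviation, but the underlying ideas coincide.
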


\begin{proof}
Firstly, 
let us assume 
that there is a radar $i$ such that $\sum_{j=1}^{T} s_{i,j} < m$ and that the corresponding $s^{\ast}$ is an NE. 
Then, radar $i$ can change 
its strategy by taking an additional measurement. 
Due to the fact that  
the radar's gain function in~\eqref{eps:gain_eq} is increasing in the number of measurements, its utility will be increased. 
But that contradicts our initial assumption that  $s^{\ast}$ is an NE; thus, 
%
%
%
as per our intuition, each radar should make all possible transmissions toward the target(s) at each time instant. 
%
%
Next, 
note that if the total number of measurements is less than or equal to the number of targets, 
the radars are 
worse off if more than one measurement in total is devoted to the same target. 
Also, due to the structure of gain function, NE are precisely $\frac{T!}{\left(T -N \cdot m\right)!}$ outcomes in which each measurement is devoted to a distinct target.
On the other hand, if $N \cdot m > T$, the corresponding condition 
states that all targets should be covered as equally as possible. Here, each NE corresponds to a balanced allocation. For instance, the allocation in Fig~\ref{fig:example1_1}\subref{fig:example1a} is not an NE since the payoffs can be increased if some player moves its measurement from target $b$ to any other target. 
Finally, %
since the gain of any target is the same for the same number of measurements,  
%
the game appears to be 
a pure anti-coordination one. 
Thus, every NE is also Pareto optimal, which finally implies that PoA=1.
\end{proof}

\begin{prp}
	\label{less_case_c}
The game for case (b) has PoA $>1$, and any track assignment is an NE, if 
$\sum_{j=1}^{T} s_{i,j} = m$, and if,
	\begin{itemize}
		\item  
		for a scenario with $N \cdot m \leq T$,
		each radar chooses its 
		most accurate target that has not been selected,
		\item for 
		$N \cdot m > T$,
		the first $\left \lceil{\frac{N \cdot m}{T}}\right \rceil-1$ levels are filled in, i.e., $m_{j}^t \geq \left \lceil{\frac{N \cdot m}{T}}\right \rceil-1$,  $\forall j \in \mathcal{T}$,  and for the $\left \lceil{\frac{N \cdot m}{T}}\right \rceil$-th level each radar chooses its 
		most accurate target that has not been selected by others, where $\lceil{\cdot}\rceil$ is the ceiling function.
	\end{itemize} 
\end{prp}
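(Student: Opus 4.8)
The plan is to treat $\Gamma^{(1)}$ as a common-payoff (identical-interest) game, since every radar shares the objective $u_i(s)=\sum_{j=1}^{T}\mathrm{gain}_j(m^t_j)$, so that a profile is a Nash equilibrium precisely when no radar can raise this common sum by reallocating its own $m$ beams. First I would reuse the budget-exhaustion step from Proposition~\ref{b_slucaj}: if some radar had $\sum_j s_{i,j}<m$, then Assumption~1 ($\Delta g^{(j)}_p>0$) lets it add one beam and strictly increase the sum, so $\sum_j s_{i,j}=m$ is necessary at any equilibrium. The rest of the argument then rests on the case-(b) separation inequality $\min_{j}\Delta g^{(j)}_p>\max_{j}\Delta g^{(j)}_{p+1}$, which totally orders all increments first by their level $p$ and only then by target; this ordering is what makes the greedy, level-by-level filling both socially optimal and, as I will check, stable under unilateral deviations.

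For $N\cdot m\le T$ I would verify that the stated configuration---each radar placing its $m$ beams on distinct targets, always choosing its most accurate still-unselected target---is a best response for every radar. Because the beams are distinct, every beam contributes a level-one increment, and the greedy rule makes the occupied set the $N\cdot m$ targets with the largest $\Delta g^{(\cdot)}_1$. A unilateral move by radar $i$ can only shift a beam to a currently unmeasured target, which by the greedy choice has a no-larger $\Delta g_1$, or add a second beam to an already-measured target, which yields some $\Delta g^{(\ell)}_2<\Delta g^{(\cdot)}_1$ by separation; both are non-improving, so the profile is an equilibrium. For $N\cdot m>T$ I would argue analogously for the balanced configuration in which the first $\lceil N\cdot m/T\rceil-1$ levels are full ($m^t_j\ge\lceil N\cdot m/T\rceil-1$ for all $j$) and the top level is filled greedily on the most accurate still-available targets. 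Here any single-beam relocation surrenders one increment and gains another; separation makes the change strictly negative whenever the gained increment sits at a deeper level than the surrendered one, so the only candidate profitable moves keep the beam at the same top level, and the greedy choice of the most accurate available targets for that level eliminates these as well.

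The remaining and, in my view, the hardest part is the claim $\mathrm{PoA}>1$. Unlike case~(a), where the common anti-coordination structure makes every equilibrium Pareto optimal and hence yields $\mathrm{PoA}=1$ in Proposition~\ref{b_slucaj}, here the per-target increments differ, so the game is a \emph{ranked} coordination game in which distinct equilibria need not be payoff-equivalent. The route I would take is to exhibit two equilibria with different social welfare: the centralized optimum (the greedy allocation above, which attains $\max_{s}\sum_i u_i$) and a second stable profile in which the scarce high-value beams are assigned to a less accurate set of targets through a decentralized coordination failure, so that correcting it would require a simultaneous move by several radars rather than any single one. Showing that such a suboptimal profile is genuinely immune to unilateral deviation---that is, that the inefficiency is locked in by how the beams are distributed across radars and not by any one radar's choice---is the main obstacle; once a single such equilibrium is produced, the inequality $\mathrm{PoA}=\big(\max_s\sum_i u_i\big)/\big(\min_{s\in\mathcal{S}^{\mathrm{NE}}}\sum_i u_i\big)>1$ follows immediately from the definition.
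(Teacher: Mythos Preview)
Your proposal is correct and follows the same line as the paper's proof, only far more carefully. The paper's argument for the NE part is literally ``similar arguments hold as for Prop.~\ref{b_slucaj}'', and your budget-exhaustion step plus the level-by-level deviation analysis driven by the case-(b) separation inequality $\min_{j}\Delta g^{(j)}_{p}>\max_{j}\Delta g^{(j)}_{p+1}$ is exactly the detailed unpacking of that one sentence.

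For $\mathrm{PoA}>1$ you are working harder than the paper does. The paper does not construct a second, explicitly suboptimal equilibrium; it simply observes that the conditions in the statement already describe a \emph{family} of NE (the greedy ``most accurate unselected target'' rule yields different allocations depending on the order in which radars pick), and because the increments are target-dependent in case~(b), these equilibria are not payoff-equivalent---the game is a \emph{ranked} anti-coordination game rather than a pure one. Hence not every NE is Pareto optimal, and $\mathrm{PoA}>1$ drops out directly from the definition without any further construction. Your route---exhibiting a specific locked-in suboptimal profile---certainly works and is more rigorous, but the ``main obstacle'' you identify (proving the suboptimal profile is unilaterally stable) is already handled by the very NE verification you did in the first part: any profile satisfying the proposition's conditions is an NE, and at least two such profiles have different welfare.
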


\begin{proof}
Similar arguments hold as for Prop.~\ref{b_slucaj}. 
Yet, 
the game above seems to be a 
ranked anti-coordination game.
Note that here there are still multiple NE, but not all NE are necessarily equal, and hence, not every NE is Pareto optimal (only one is). So, the conditions above are not sufficient to have also a Pareto optimal NE, and consequently, PoA is strictly greater than 1.
\end{proof}


\subsection{BRD-based distributed track selection algorithm}
\label{BRD_subsection}

	In the sequel, we present a simple, low-complexity, distributed algorithm, based on the best-response dynamics (BRD)~\cite{shoham2008multiagent},
	\cite{Han_2012_book}, \cite{bacci_2015game} literature, that looks for  the NE of the analyzed game.  
	Toward this goal, 
	let us first define the notion of radar $i$'s best response to the vector of strategies $s_{-i}$, denoted by $\mathrm{BR}_i (s_{-i})$, as the set-valued function 
	$$ \mathrm{BR}_i (s_{-i}) =  \mathrm{arg} \, \max\limits_{s_i \in \mathcal{S}_i} u_i (s_i, s_{-i}) .$$
	
	Note that there are two 
	versions of BRD that can be used; namely, the sequential version 
	\begin{align*} 
s_i(k+1)  \in   \mathrm{BR}_i \big(s_1(k+1), \ldots \hspace{4cm}\\
\ldots , s_{i-1}(k+1), s_{i+1}(k), \ldots , s_N(k) \big) ,
	\end{align*}
	where $s_i(k+1)$ is the action selected by radar $i$ at time step $(k+1)$,
	and the simultaneous one 
	where all players update their actions
	synchronously
	$$ s_i(k+1) \in \mathrm{BR}_i \left(s_{-i}(k)\right) .$$
	
	Although 
	the former one is more frequently used~\cite{bacci_2015game}, it requires the definition of a cyclic path that covers all nodes, which is an NP-hard problem~\cite{papadimitriou2003computational},~\cite{6882254_TSPinc}, 
	and furthermore it has limited applicability in large and delay-intolerant networks if the whole cycle has to be performed at each time instant. 
	Thus, we focus on the simultaneous BRD implementation which, on the other hand, %
	may experience the problem of a coordination failure 
	due to strategic uncertainty %
		(see Fig.~\ref{fig:radar_coord_failure}). %
	Nevertheless, this problem can be alleviated  if radars select their best responses with some probability $\alpha<1$.

	\begin{figure}[!t]
		\centering 
		
		\subfigure[]{
			\includegraphics[width=0.24\textwidth]{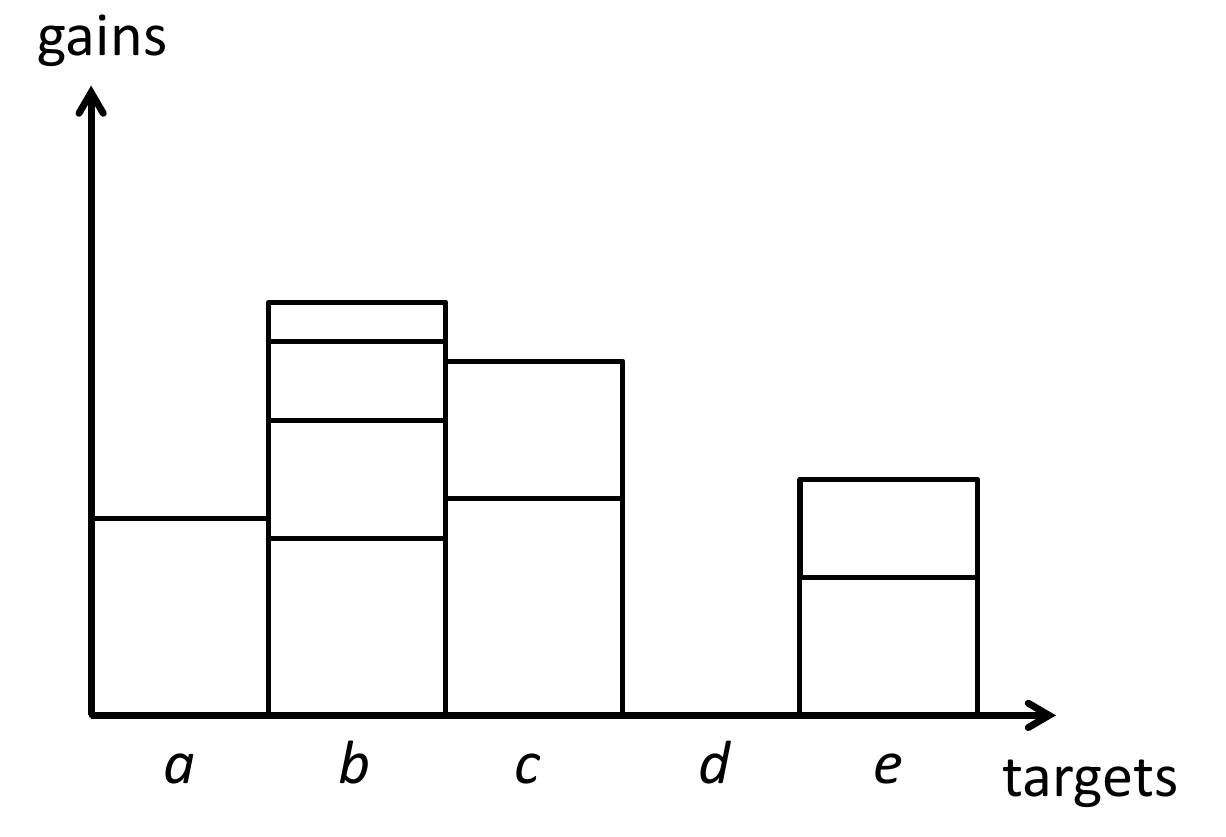}
					\label{fig:fail_a}
		} \hspace{0.05cm}
		\subfigure[]{
			\includegraphics[width=0.24\textwidth]{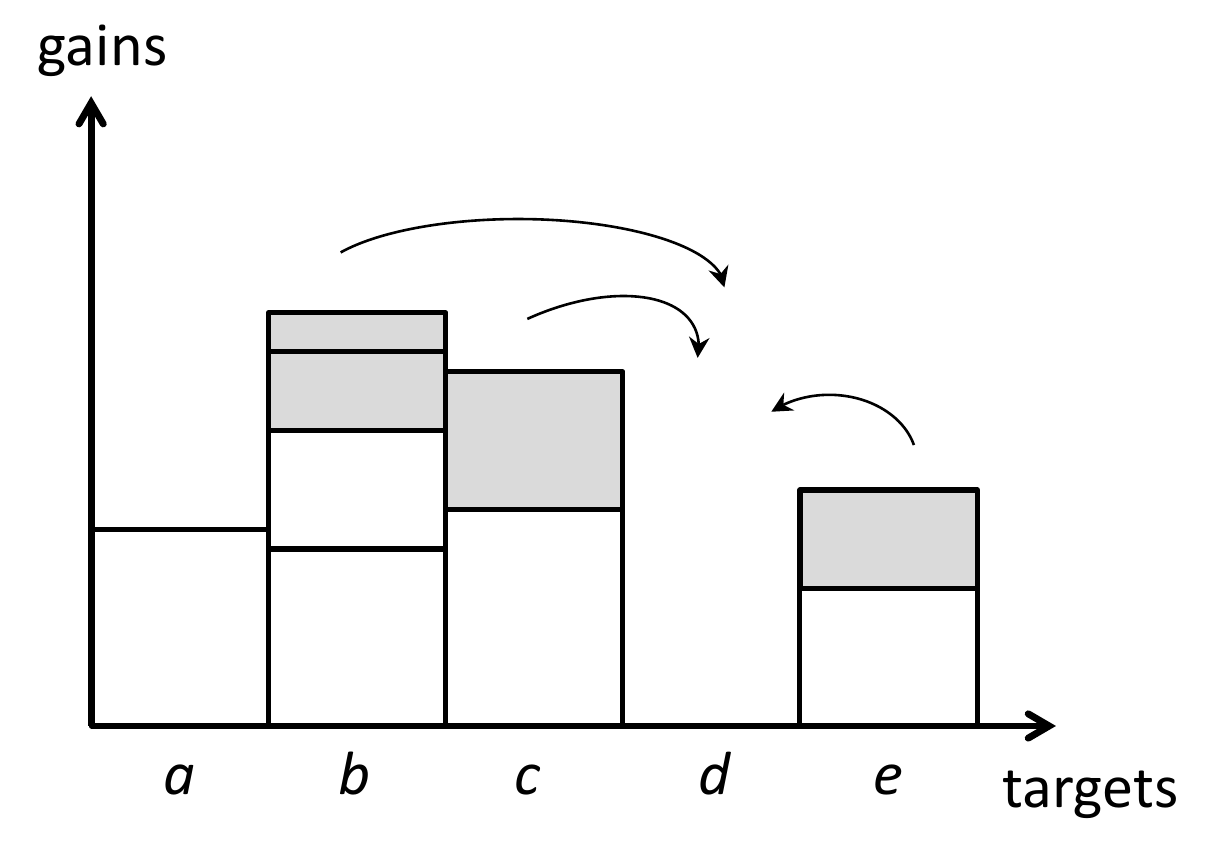}
			
					\label{fig:fail_b}
		} \hspace{0.05cm}
		\subfigure[]{
			\includegraphics[width=0.24\textwidth]{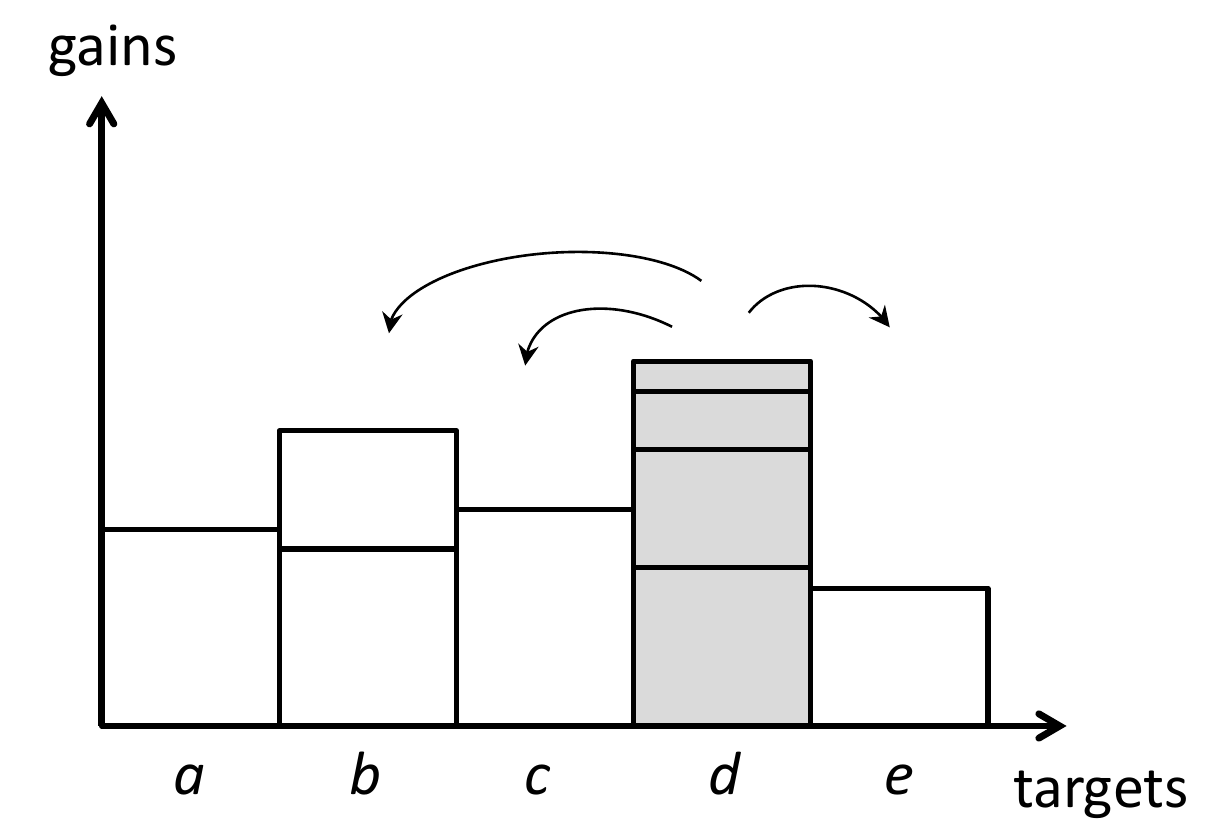}
					\label{fig:fail_c}
		}
		\caption{A coordination failure example;~\subref{fig:fail_a} initial track allocation;~\subref{fig:fail_b} four radars decide to change their current track choices (gray boxes) and illuminate target $d$;~\subref{fig:fail_c} the track allocation in the following time instant and possible radar choices (denoted by the arrows) that may result into the cyclic behavior.}
		\label{fig:radar_coord_failure}
	\end{figure}

	In the games above where $N \cdot m > T$, in general, two types of NE may arise,  
	one where a radar illuminates only different targets and the other where it chooses the same target more than once.
	In practice, it is of interest to exploit the radars' diversity; thus, we focus on the former type.
	%
	%
	%
	Let $\mathcal{T}_{\mathrm{sel}}^{(i)}$ denote the set of targets selected by radar $i$.

		Then, a summary of the proposed algorithm is provided in the following.

	\noindent
	\rule{\linewidth}{0.5mm} \\[-0.5mm]
	\textbf{Algorithm 1: Low-complexity BRD-based  distributed scheme (LC-BRD) for track selection}\\[-2mm]
	\rule{\linewidth}{0.5mm}
	\begin{itemize}

		\item 	
		Start with any strategy profile $s(0)$.
		\item At each time instant $k=1, 2, \ldots,$ each radar $i \in \mathcal{N}$ performs the following steps:
		\begin{itemize}
			\item[s1)] Count $m_j^t$, $\forall j \in \mathcal{T}$, and reallocate the measurements for $\forall j \in \mathcal{T}_{\mathrm{sel}}^{(i)}$ 
			satisfying $s_{i,j}>1$ to a target $\mathrm{ arg min}_{\ell \in \{\mathcal{T}\setminus\mathcal{T}_{\mathrm{sel}}^{(i)}\}} m_{\ell}^t$.
			\smallskip
			\item[s2)] With probability 
			$\alpha$, reallocate the measurement from target $j$ to $\ell$ until
			\begin{itemize}
				\item $\exists j \in \mathcal{T}_{\mathrm{sel}}^{(i)}$ such that $m_{j}^t > \left \lceil{\frac{N \cdot m}{T}}\right \rceil$ and the measurement for $\ell$ is the most accurate one of those satisfying $\mathrm{ arg min}_{q \in \{\mathcal{T}\setminus\mathcal{T}_{\mathrm{sel}}^{(i)}\}} m_{q}^t$, or
				\item  $m_j^t- m_{\ell}^t=1$, where $ m_j^t=\mathrm{max}_{q \in\mathcal{T}_{\mathrm{sel}}^{(i)}} m_q^t$ and $m_{\ell}^t=\mathrm{min}_{q \in \{\mathcal{T}\setminus\mathcal{T}_{\mathrm{sel}}^{(i)}\}} m_q^t$, and if measurement for $\ell$ is more accurate than the one for $j$. 
			\end{itemize} 
			\item[s3)] Transmit/receive measurements, and $\forall j \in\mathcal{T}$, execute~\eqref{eq_predict_1}-\eqref{eq_predict_2} and employ all available measurements in~\eqref{eq_update_1}-\eqref{eq_update_3}.
			
			\item[s4)] (optional) if $u_i \big(s_i(k), s_{-i}(k)\big) < u_i \big(s_i(k-1), s_{-i}(k-1)\big)$ revert back to the strategy from $k-1$ and skip the first 2 steps, i.e., $s_i(k+1)=s_i(k-1)$.
		\end{itemize}

	\end{itemize}
	\vspace{-0.25cm}
	\rule{\linewidth}{0.5mm}\\[-2mm]

	In the context of general BRD algorithms, players need to observe the actions
	played by the others; however, in our algorithm, it can be verified that the knowledge of the numbers of transmissions per each target $j$, i.e., $\left\{m^t_j\right\}_{j=1}^{T}$, is sufficient. Specifically, note that  $\left\{m^t_j\right\}_{j=1}^{T}$ are aggregate functions of the radars' actions and, due to  $u_i (s_i, s_{-i}) =   u_i (m^t_1,\ldots, m^t_{T})$,  observing the actions themselves 	is not necessary. 

Note that there are no convergence results for general games using BRD, i.e., a BRD-based algorithm may miss an NE
~\cite{shoham2008multiagent}, \cite{bacci_2015game}.
Fortunately, for some special classes of games there exist sufficient conditions under which the convergence of the sequential BRD to a pure NE is always guaranteed. %
For instance, one such class is related to the so-called potential games~\cite{Monderer1996124}, which we define next.

	\begin{defn}
		A finite, $N$-person normal-form game $\Gamma=(\mathcal{N}, \mathcal{S}, u)$ is called a \textit{potential} game\footnote{Strictly speaking, the game defined in the definition above is formally known as \textit{exact potential game}. There are other variants of potential games, where probably the most general one is the so-called \textit{ordinal potential game} in which the condition $u_i (s_i, s_{-i}) -  u_i (s^{\prime}_i, s_{-i})>0$ iff $\Phi (s_i, s_{-i}) - \Phi (s^{\prime}_i, s_{-i})>0$ holds. Most importantly, both types of potential
			games are still guaranteed to have pure-strategy Nash equilibria.} if there exists a function $\Phi \colon \mathcal{S} \rightarrow \mathbb{R}$ such that $\forall i \in \mathcal{N}$ and for all $(s_i, s_{-i}),(s^{\prime}_i, s_{-i}) \in \mathcal{S}:$
		$$u_i (s_i, s_{-i}) -  u_i (s^{\prime}_i, s_{-i}) = \Phi (s_i, s_{-i}) - \Phi (s^{\prime}_i, s_{-i}) ,$$
		and such a function $\Phi$ is called 
		potential function of the game.
\label{potential}
	\end{defn}
In every finite  potential game, every improvement path is finite. Since a finite game  has a finite
strategy space, the potential function takes on finitely many values and the above
sequence must terminate in finitely many steps in an equilibrium point.

Unlike the sequential BRD, there does not seem to exist general convergence results for the simultaneous BRD, yet only a few application-specific proofs\cite{bacci_2015game}. Nevertheless, the proposed Algorithm $1$ does converge to a pure NE.

	\begin{thm}
The proposed Algorithm 1, with the s4) step, does converge to a pure Nash equilibrium of the proposed one-shot track selection game $\Gamma^{(1)}=(\mathcal{N}, \mathcal{S}, u)$, defined  in Sec~\ref{subsec_4_1_GT_model}.
	\end{thm}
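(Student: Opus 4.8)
The plan is to exploit the special structure of $\Gamma^{(1)}$. Because each radar's payoff in~\eqref{eps:GT_model_eq1} depends on the profile only through the aggregate counts $m^t_j=\sum_{i} s_{i,j}$, all players share the \emph{same} utility, $u_i(s)=\Phi(s):=\sum_{j=1}^{T}\mathrm{gain}_j(m^t_j)$ for every $i\in\mathcal{N}$. Such common-interest games are exact potential games in the sense of Definition~\ref{potential}, with potential $\Phi$: for any unilateral change $s_i\to s_i'$ one has $u_i(s_i,s_{-i})-u_i(s_i',s_{-i})=\Phi(s_i,s_{-i})-\Phi(s_i',s_{-i})$ by construction. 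First I would establish this identity, and then invoke the fact recalled after Definition~\ref{potential} that in a finite potential game every strict improvement path terminates; hence $\Gamma^{(1)}$ has a pure NE and any sequence of profitable unilateral moves reaches one in finitely many steps.

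The difficulty is that Algorithm~1 runs the \emph{simultaneous} BRD, in which several radars reallocate within the same slot. As Fig.~\ref{fig:radar_coord_failure} shows, a joint move in which each radar improves individually can nonetheless lower the common payoff, so $\Phi$ need not be monotone and cycling may occur. This is precisely what step s4) repairs, and my argument would revolve around it. I would show that s4) renders $\Phi$ nondecreasing along the committed trajectory: after the simultaneous update yields $s(k)$, step s4) compares $u_i(s(k))$ with $u_i(s(k-1))$ --- a test every radar performs identically, since the payoff is common --- and reverts to $s(k-1)$ exactly when the joint move decreased $\Phi$. Consequently the values of $\Phi$ at the committed states form a nondecreasing sequence, never falling below its previous level.

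Next I would combine monotonicity with finiteness. Since $\mathcal{S}$ is finite, $\Phi$ assumes finitely many values, so a nondecreasing sequence of committed potentials can strictly increase only finitely often and must stabilize at some $\Phi^\star$. It remains to identify the profile $s^\star$ at which $\Phi$ stabilizes as a NE, and here the randomization $\alpha<1$ is essential. If $s^\star$ were not an equilibrium, some radar $i$ would possess a profitable unilateral reallocation of the kind generated by steps s1)--s2) --- moving a beam from an over-covered target to the least-covered one --- and because $u_i=\Phi$ this deviation \emph{strictly} raises $\Phi$ (using Assumptions~1--2 on the gain increments). With positive probability, bounded away from zero through the independent per-radar coin flips governed by $\alpha$, radar $i$ alone performs this move while the others stay put; the resulting single-player improvement strictly increases $\Phi$ and therefore survives the s4) test, contradicting stabilization. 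As such an escaping event has positive probability at every visit, a Borel--Cantelli argument shows it occurs almost surely, so the process cannot linger at a non-equilibrium profile and must settle at a NE, where s1)--s2) leave the profile unchanged.

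I expect the main obstacle to be this last step: ruling out that the simultaneous, probabilistic dynamics becomes trapped in a cycle of \emph{non}-equilibrium states of equal potential. The monotone-potential argument, supplied by s4), handles the impossibility of decrease and bounds the number of strict increases, but establishing that \emph{every} non-equilibrium profile admits an improving move that the randomized s1)--s2) can realize in isolation requires checking that the restricted reallocations of the algorithm generate all profitable unilateral deviations for this gain structure --- equivalently, that the fixed points of Algorithm~1 coincide with the Nash equilibria characterized in Propositions~\ref{b_slucaj}--\ref{less_case_c}. The common-payoff/potential identity together with the revert step reduce the convergence question to this finite, absorbing-state analysis.
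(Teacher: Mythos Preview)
Your proposal is correct and follows essentially the same route as the paper: identify $\Gamma^{(1)}$ as a common-interest (hence exact potential) game with $\Phi=u_i$, use step~s4) to make $\Phi$ nondecreasing along the committed trajectory, and invoke the $\alpha<1$ randomization to guarantee a positive-probability single-player improving move from any non-equilibrium profile. Your treatment is in fact more careful than the paper's brief argument---you make the absorbing-state/Borel--Cantelli reasoning explicit and flag the one point the paper glosses over, namely that the specific reallocations in s1)--s2) must realize a profitable unilateral deviation at every non-NE profile.
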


\begin{proof}	
 Let us first analyze a hypothetical, sequential version of the proposed algorithm. Note that one may construct a potential function $\Phi$ for the analyzed game, i.e., by setting $\Phi= u_i(s)$,  $\forall i \in \mathcal{N}$. Thus, a sequential BRD-based strategy for the analyzed game would converge. Now, for the proposed (simultaneous) algorithm, note that in general case $\Phi$ 
is not  non-decreasing  
as time progresses; however, 
due to the s4) step, only the states where $\Phi$ is not smaller than the best previous $\Phi$ value are actually kept. Specifically,  in case where the players at time $k$ select a coordination failure profile which may result in $\Phi(k)<\Phi(k-1)$ (such as one given in Fig.~\ref{fig:radar_coord_failure}), this step ensures that $\Phi(k+1)=\Phi(k-1)$. Then, due to $\alpha<1$, there is a  non-negligible probability that only one player will update (as in the asynchronous version) and $\Phi$ will increase; thus, the algorithm will eventually converge. 
\end{proof}

\textit{Remark 1}: Strictly speaking, the proposed algorithm with the s4) step is not a traditional simultaneous BRD, since it requires that each player also stores in memory 
the action and the utility value from the previous time step. This additional, yet small memory requirement is sufficient but not necessary for the algorithm to converge. For properly set $\alpha$, our simulations have shown that the proposed algorithm, even without the s4) step, actually converges and performs well (see also~\cite{track_selection_7472254}).

\textit{Dynamic scenario}:
Note that the tracking accuracy gain in~\eqref{eps:gain_eq}, which constitutes the utility of each radar in~\eqref{eps:GT_model_eq1}, 
generally depends on measurement noise covariance $R_{j,i}$, deterministic target dynamics $F$ and process noise covariance $Q$. 
To account for time-varying accuracy measures, i.e., range and azimuth variances, and to deal with 
possibly high target dynamics, 
the proposed algorithm can be modified in one of the following ways:
\begin{itemize}
	\item
	 LC-BRD algorithm can be 
	 repeated every $K$ time instants so as to search for other NE during the tracking process,
	%
	or
	%
	\item
	Each radar running LC-BRD  may randomly change its strategy in step $s2$ (regardless of the conditions in this step) with a small $\epsilon$ probability. In other words, step $s2$ in LC-BRD is run with probability $1-\epsilon$.
\end{itemize}
The modifications above achieve similar performance, as it will be shown in the simulation section.

\section{Scenario 2: Coordination $\&$ Conflict}
\label{coord_CONflict}

In the previous section, we have analyzed the scenario where all radars share the same interests; thus, the main challenge has been to tackle the problem of coordination among radars. %
Note that, in practice, different radars or groups of radars (or more precisely, radar operators) can have different interests over targets. For instance, the radar operators can have different target priorities. 
Or even simpler, radar operators can be interested only in a specific region and/or in a specific type of targets. Yet, in such situations it is still important to exploit the network cooperation, as in the scenario analyzed in the previous section. 
Thus, here %
we focus on a more demanding scenario where radars may have different interests and where issues of conflict may also arise. Specifically, we assume that radars:
\begin{itemize}
	\item[\textit{i)}] do not necessarily have the same target interests,
	\item[\textit{ii)}] are limited to partial target observability, and
	\item[\textit{iii)}] do not communicate with all other radars.
\end{itemize}
An example of such a scenario is depicted in Fig.~\ref{fig:radar_network_2_3_scen_2}. For instance,  radar $1$ in Fig.~\ref{fig:radar_network_2_3_scen_2}, denoted as $R_1$,  communicates with only two neighboring radars ($R_2$ and $R_3$), observes only two targets ($T_1$ and $T_2$), while being interested in tracking three targets ($T_1$-$T_3$), i.e., there are three non-zero weights, which correspond to $T_1$-$T_3$, in its weight vector $w_1$. On the other hand, $R_3$ has different yet overlapped interests and different neighbors and observability conditions.

\begin{figure}
	\centering
	\includegraphics[width=0.85\linewidth]{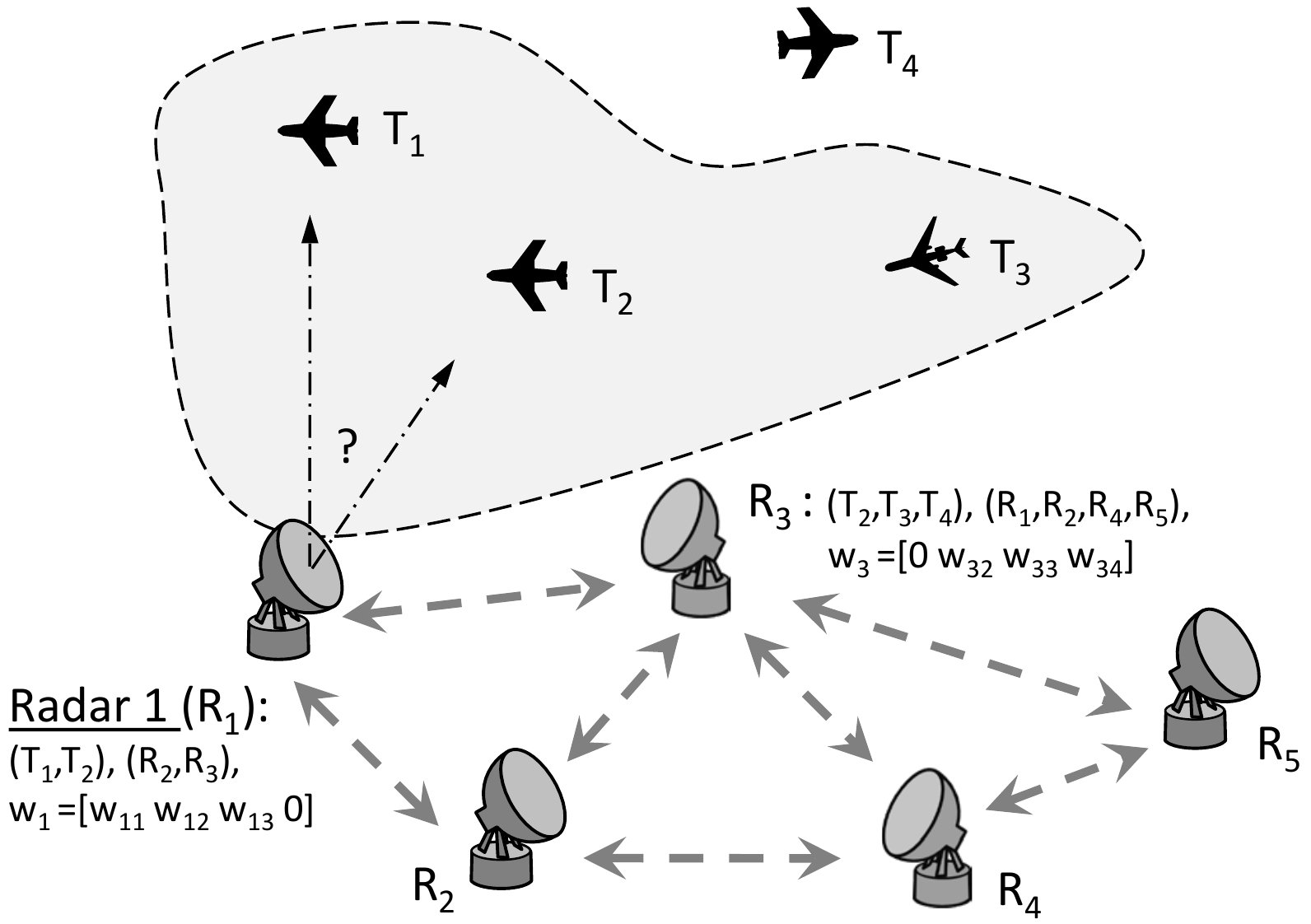}
	\caption{A track selection problem in a scenario with partial observability, limited communication and different interests among radars.}
	\label{fig:radar_network_2_3_scen_2}
\end{figure}

\subsection{Game-theoretic model}
\label{subsec_5_1_GT_redefinition}

Here, we redefine the track selection game $\Gamma^{(2)}=(\mathcal{N}, \mathcal{S}, u)$ as: 
\begin{itemize}
	\item The players are the radars represented by the set $\mathcal{N}$.
	\item The strategy of each radar $i$ is represented by a 
	$T$-tuple $s_i =(s_{i,1}, s_{i,2}, \ldots , s_{i,T})$ where 
	\begin{align}
	s_{i,j} =\begin{cases}
	a,& \text{if } j \in \mathcal{T}_i\\
	0,              & \text{otherwise}
	\end{cases}
	\end{align}
	where $a$ is the number of transmission beams that radar $i$ devotes to target $j \in \mathcal{T}_i$ and it holds that $a\leq m$.
	Each strategy-tuple has at most $m$ transmissions, i.e., $\sum_{j \in \mathcal{T}_i} s_{i,j} \leq m$. 
Now, the number of transmissions each radar $i$ collects from 
$\mathcal{N}_i$ and 
related to some target $j \in \bigcup_{i \in \mathcal{N}_i}\mathcal{T}_i $ is given as $m^t_j (i)=\sum_{i \in \mathcal{N}_i} s_{i,j}$.
	\item The utility for each radar $i$ is given by
	\begin{gather}\label{eps:GT_model_eq1_other}
		\begin{split}
			u_i (s_i, s_{-i}) =     \sum_{j=1}^{T}  w_{i,j} \cdot \mathrm{gain}_j\left(m^t_j(i)\right) \, , 
		\end{split}
	\end{gather}
with $\mathrm{gain}_j\left(m^t_j(i)\right) =\mathrm{Tr} \left\{ P_{j, k\vert k-1}  - P_{j, k\vert k}^{(m^t_j(i))} \right\}$, and $w_{i,j}$ being the weight that a radar $i$ gives to some target $j$. In fact, $w_{i,j}$ can be seen as 
$(i,j)^\mathrm{th}$ element of $N \times T$ matrix $W$ which defines the target interests across all radars. Also, note that $u_i (s_i, s_{-i})=u_i \left( \{s_l\}_{l \in \mathcal{N}_i} \right)$.

\end{itemize}

\subsection{Correlated equilibria and regret-matching}
\label{subsec_5_2_CE}

The scenario considered in the previous example resembles the well-known \textit{Battle of the Sexes} game~\cite{shoham2008multiagent} where players have a common interest to coordinate (or in our case to anti-coordinate), but they have different preferences regarding the (anti-)coordinated states of the game (which are NE). 
However, for a general setting of our game defined above, 
it is not easy to characterize possible NE neither in terms of their efficiency nor even their existence. Furthermore, we cannot ensure that the game is potential. This is due to the fact that, in general case, it is difficult to construct a potential function $\Phi$ since 
an action profile change can influence different players in an arbitrarily different way (see an example in Fig.~\ref{fig:poten}).

\begin{figure}[!t]
	\centering 
	\subfigure[$R_1$, time $k$]{
		\includegraphics[width=0.2\textwidth]{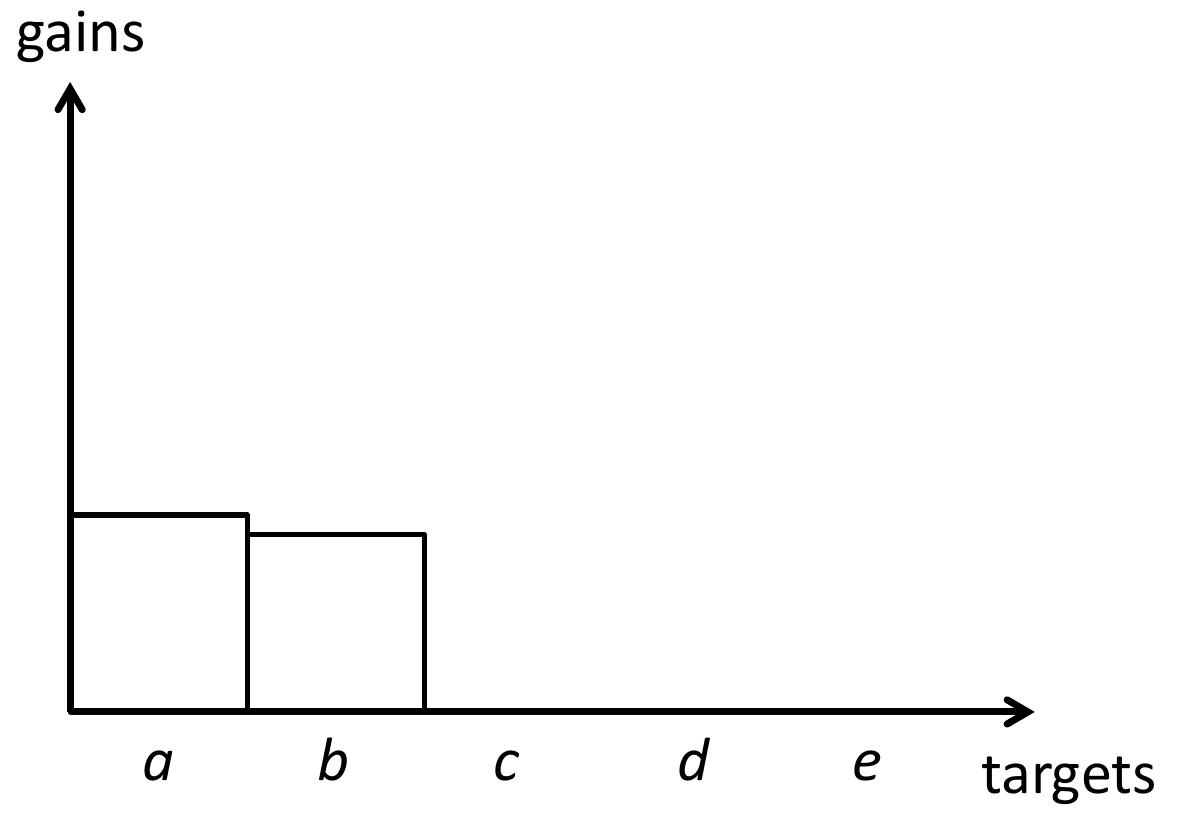}
		\label{fig:poten_1}
	} \hspace{0.1cm}
	\subfigure[$R_3$, time $k$]{
		\includegraphics[width=0.2\textwidth]{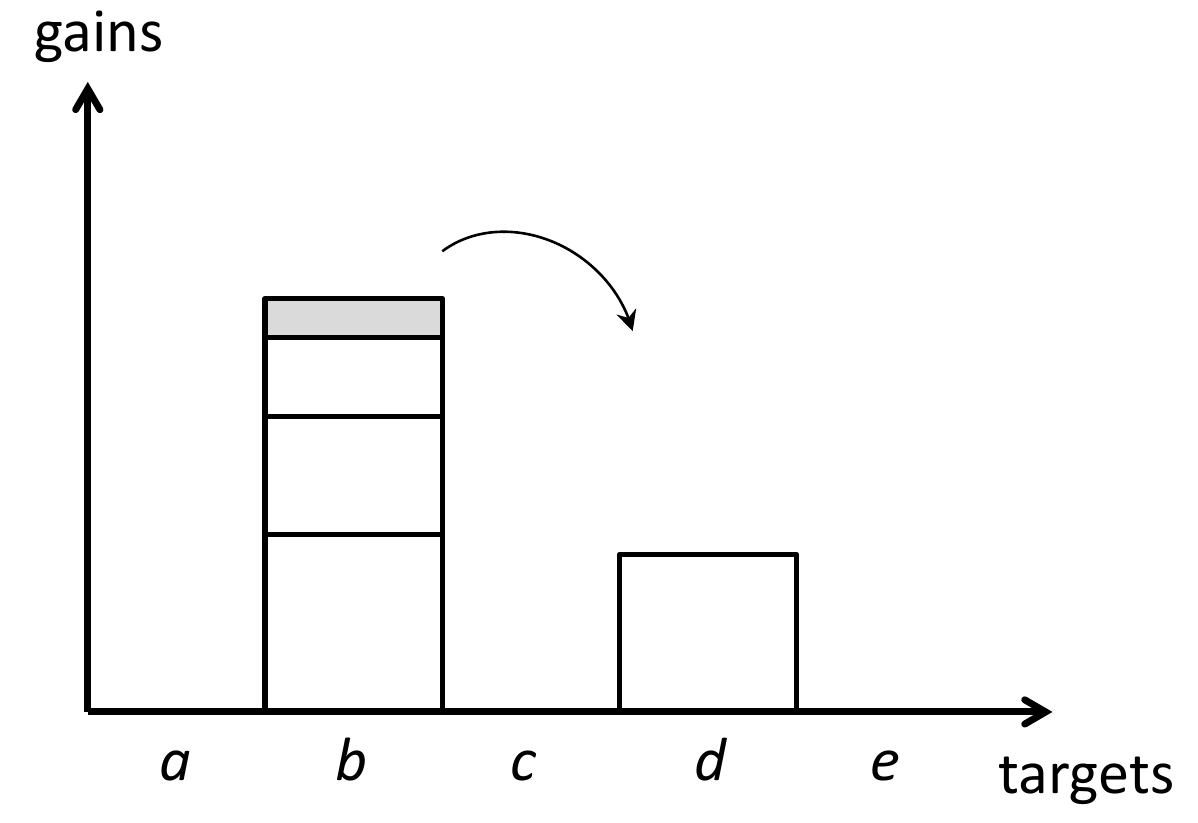}
		\label{fig:poten_2}
	}
		\subfigure[$R_1$, time $k+1$]{
			\includegraphics[width=0.2\textwidth]{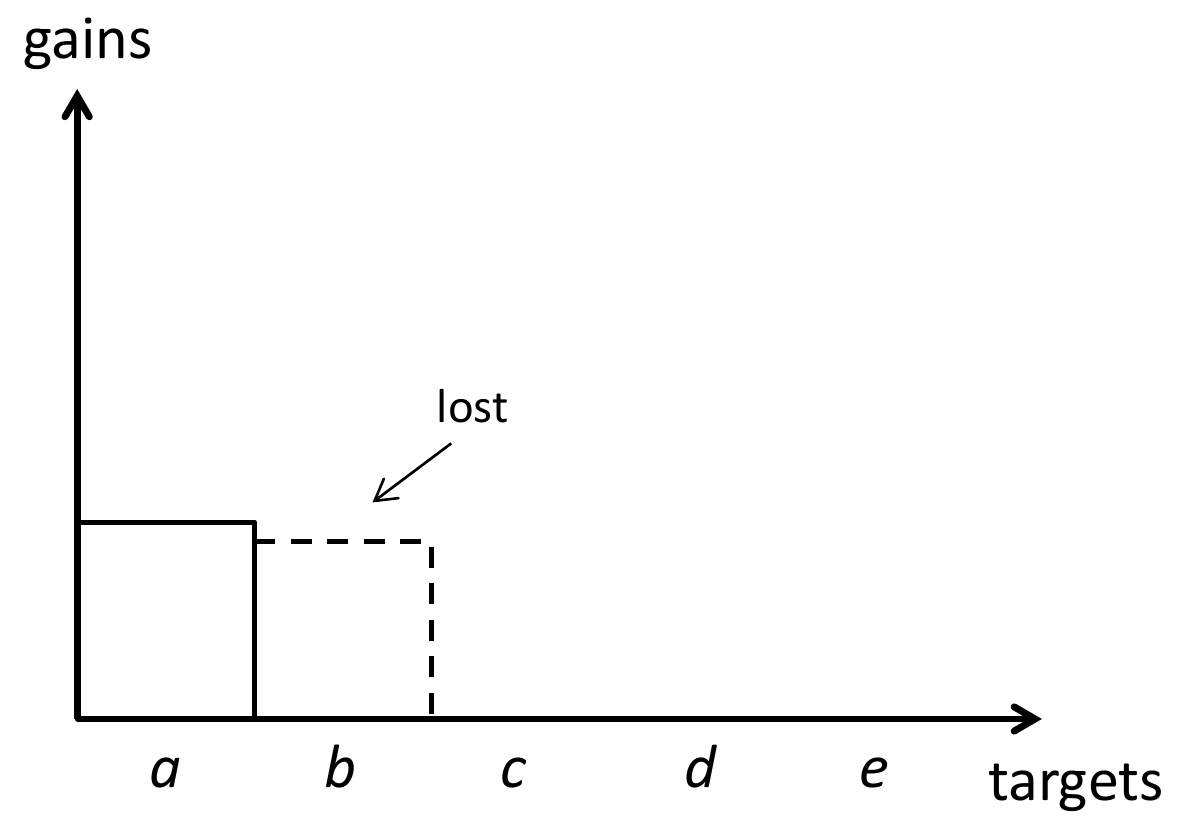}
			\label{fig:poten_3}
		} \hspace{0.1cm}
		\subfigure[$R_3$, time $k+1$]{
			\includegraphics[width=0.2\textwidth]{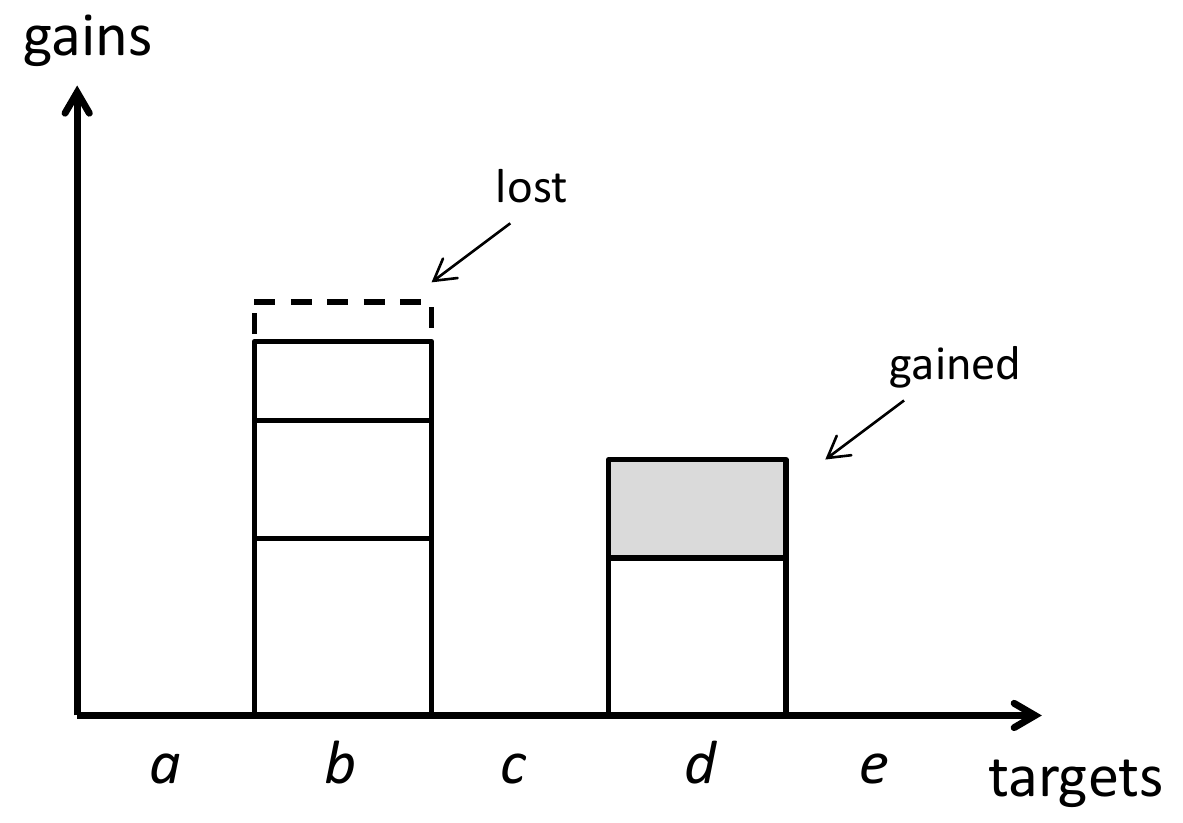}
			\label{fig:poten_4}
		}
	\caption{Assume a track allocation across the network as in Fig.~\ref{fig:example1b}. Due to limited connectivity and different interests, at time $k$, radar 1 ($R_1$) experiences the gains given in~\subref{fig:poten_1}, and those in~\subref{fig:poten_2} are for radar 3 ($R_3$) which decides to change its selection. At time $k+1$, $R_1$ has a (great) loss while $R_3$ has a (relatively small) gain.  
	}  
	\label{fig:poten}
\end{figure}

\textit{Remark 2}: In the extreme case where radars have totally different interests (with no overlap w.r.t. the interests and communication topology), then it would be easy to define a potential function (just the sum of all utilities). However, the solution (NE) is trivial since the problem is totally decoupled (there is no inter-dependence); each radar's utility depends only on its own strategy selection $u_i(s)=u_i(s_i), \forall i \in \mathcal{N}$.

For the reasons mentioned above, here we focus on the solution concept of correlated equilibrium (CE). 
Note first that, as mentioned in Sec.~\ref{Sec_2_GT_overview}, a CE always exists in a finite game~\cite{hart2013simple_book}. Actually, every NE is a CE 
and NE correspond to the special case of a CE for which the joint distribution over the strategy profiles $\pi(s_{i}, s_{-i})$ factorizes as the product of its marginals, 
i.e., the play of different players is independent~\cite{Aumann197467}, \cite{hart2013simple_book}. Furthermore, in certain settings the set of CE may include  even the distribution that is not in the convex hull of the NE distributions.

Next, we will exploit a  class of simple, adaptive algorithms, called \textit{regret matching}
, in order to reach a CE of the analyzed track selection game. It
does not entail any sophisticated updating, prediction, or fully rational behavior~\cite{Hart_paper_a_simple}. The approach can be summarized as follows: 
At each time instant, a radar may either continue
playing the same target strategy as in the previous time instant, or switch to other strategies,
with probabilities that are proportional to how much higher his accumulated
accuracy gain would have been had he always made that change in the past. Specifically, 
at each time instant $k$ and for any two distinct strategies $s_i^{\prime}\neq s_i$, the regret which radar $i$ experiences at time $k$ for not playing $s_i^{\prime}$ is given by
	\begin{gather}\label{eps:RM_eq1}
	\begin{split}
	R_{i, k}(s_i, s_i^{\prime}) =\max\{D_{i, k}(s_i, s_i^{\prime}), 0\},
\end{split}
	\end{gather}
where the term $D_{i, k}(s_i, s_i^{\prime})$ represents the average payoff at time $k$ for not having played, every time that $s_i$ was played in the past $k_p \leq k$, the different strategy $s_i^{\prime}$:
	\begin{gather}\label{eps:RM_eq2}
	\begin{split}
	& D_{i, k}(s_i, s_i^{\prime})=\\&  \qquad ={1\over k}\sum\limits_{k_p\leq k}\left[u_i \left(s_i^{\prime}(k_p), s_{-i}(k_p)\right) - u_i \left(s_i(k_p), s_{-i}(k_p)\right)\right].
	\end{split}
	\end{gather}
Next, the probability at time $k+1$ for radar $i$ to play some strategy $s_i^{\prime} \in \mathcal{S}_i$ is a linear function of its regret vector, i.e.,
	\begin{gather}\label{eps:RM_eq_probabilities}
	\begin{split}
 \begin{cases} \pi^{k+1}_i(s_i^{\prime})= \frac{1}{\mu} R_{i, k}(s_i, s_i^{\prime}) ,\quad \text{for all} \,\, s_i^{\prime}\neq s_i,  \\ 
\pi^{k+1}_i(s_i)  =1-\sum\nolimits_{s_i^{\prime}\neq s_i} \pi^{k+1}_i(s_i^{\prime}),\quad \text{otherwise}, \end{cases}
	\end{split}
	\end{gather}
	where the fixed constant $\mu >0$ is selected to be large enough such that  $\pi^{k+1}_i(s_i)>0$.
Finally, for every $k$, we define the empirical distribution $\eta_{k}$ of the strategy profiles played up to time $k$, i.e., for each $s \in \mathcal{S}$, 
	\begin{gather}\label{eps:RM_eq3}
	\begin{split}
	\eta_{k}( s)={1\over k} \#\{k_p\leq k: s(k_p)=s\}, 
	\end{split}
	\end{gather}
with $\#(\cdot)$ standing for the number of times the event inside the brackets occurs while $s(k_p)$ is the action profile played at time $k_p$.  

\begin{thm}
 If every radar select targets according to \eqref{eps:RM_eq_probabilities}, then the empirical distributions $\eta_{k}$ converge almost surely as $k \rightarrow \infty$  to the set of correlated equilibrium distributions of the game $\Gamma^{(2)}$.
\end{thm}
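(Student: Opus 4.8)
The plan is to follow the Blackwell--approachability route of Hart and Mas-Colell~\cite{Hart_paper_a_simple}, adapted to the weighted, locally-observed utilities of $\Gamma^{(2)}$. First I would recast the claim purely in terms of regrets. By the definition of correlated equilibrium, a distribution on $\mathcal{S}$ belongs to the CE set if and only if, for every radar $i$ and every ordered pair $(s_i,s_i')$, the associated conditional inequality holds. Since $D_{i,k}(s_i,s_i')$ in~\eqref{eps:RM_eq2} is, up to the $1/k$ normalization, a linear functional of the empirical distribution $\eta_k$ of~\eqref{eps:RM_eq3}, it suffices to prove that the positive parts $R_{i,k}(s_i,s_i')$ of~\eqref{eps:RM_eq1} tend to $0$ almost surely for every $i$ and every pair; any limit point of $\eta_k$ then satisfies all CE inequalities at once and, the CE set being closed, $\eta_k$ converges to it.

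Next I would fix a radar $i$ and treat the family $\{R_{i,k}(s_i,s_i')\}$ as a vector-valued process in the Euclidean space indexed by ordered strategy pairs, whose target set is the non-positive orthant $\mathbb{R}_{-}$. The increment fed in at time $k+1$ has, in coordinate $(s_i,s_i')$, the value $u_i(s_i',s_{-i}(k+1)) - u_i(s_i,s_{-i}(k+1))$ when the realized play is $s_i(k+1)=s_i$ and $0$ otherwise. Blackwell's approachability theorem then reduces everything to one geometric condition: whenever the running average regret vector lies outside $\mathbb{R}_{-}$, its projection residual $[\bar R_{i,k}]^{+}$ must satisfy $\langle [\bar R_{i,k}]^{+},\, \mathbb{E}[A_{k+1}\mid\mathcal{H}_k]\rangle \le 0$, where $A_{k+1}$ is the increment above and $\mathcal{H}_k$ is the history up to time $k$.

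Verifying this inner-product condition is the crux, and I expect it to be the main obstacle. The mixing rule~\eqref{eps:RM_eq_probabilities} is engineered precisely so that the conditional switching probabilities are proportional to the current positive regrets; the key identity is that, for $\mu$ large enough, the vector of probabilities with which radar $i$ leaves its current action is the invariant distribution of the Markov chain on $\mathcal{S}_i$ whose off-diagonal transition rates are $R_{i,k}(s_i,s_i')/\mu$. Substituting this stationarity relation into $\langle [\bar R_{i,k}]^{+}, \mathbb{E}[A_{k+1}\mid\mathcal{H}_k]\rangle$ makes the weighted instantaneous-regret terms telescope, so that the expression collapses to a non-positive remainder (up to an $O(1/\mu)$ term controlled by the choice of $\mu$). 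This is exactly where the particular form of~\eqref{eps:RM_eq_probabilities} matters: it conditions on the action actually played, which is what drives convergence to correlated rather than merely coarse-correlated equilibria. The fixed weights $w_{i,j}$ and the restricted information $u_i(s)=u_i(\{s_l\}_{l\in\mathcal{N}_i})$ only rescale and localize the payoff differences and leave this computation intact, because each radar's regrets are formed through its own utility alone.

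Finally I would assemble the pieces. Blackwell's theorem yields $\mathrm{dist}(\bar R_{i,k},\mathbb{R}_{-}) \to 0$ almost surely, i.e. $[R_{i,k}(s_i,s_i')]^{+}\to 0$ a.s.\ for every ordered pair and every radar. Returning to the first step, every almost-sure limit point of $\eta_k$ then satisfies all the conditional-regret inequalities defining a correlated equilibrium of $\Gamma^{(2)}$ simultaneously; since the set of correlated equilibria is closed and convex, $\eta_k$ converges almost surely to it, which is the assertion.
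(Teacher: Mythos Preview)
Your approach is correct and matches the paper's: its proof is nothing more than a citation to Hart and Mas-Colell~\cite{Hart_paper_a_simple}, and you have outlined their Blackwell-approachability argument in more detail than the paper itself supplies. One small correction to your sketch: the rule~\eqref{eps:RM_eq_probabilities} does not play the invariant distribution of the regret-induced Markov chain but rather the transition row from the last realized action; in Hart--Mas-Colell the approachability inner-product condition is verified directly for this conditional rule via an index swap on the double sum over ordered strategy pairs, not through a stationarity relation.
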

\begin{proof}
For the proof, see \cite{Hart_paper_a_simple}.
\end{proof}  

\textit{Dynamic scenario}: 
Due to possibly time-varying accuracy measures and high target dynamics, as explained in the end of Sec.~\ref{BRD_subsection}, as well as time-varying radar interests, the suggested approach has to be modified so as to take into account the aforementioned effects.  In fact, by incorporating an adaptive mechanism in the calculation of the average regret, it can be shown that the resulting algorithm can track the changes if they are sufficiently small. 

Firstly, note that the average regret in~\eqref{eps:RM_eq2} can be computed recursively, i.e., 
		\begin{gather}\label{eps:RM_eq_recur1}
		\begin{split}
		D_{i, k}(s_i, s_i^{\prime})= &{k-1\over k} \,	D_{i, k-1}(s_i, s_i^{\prime}) +\\ & +  {1\over k}\left[u_i \left(s_i^{\prime}(k), s_{-i}(k)\right)- u_i \left(s_i(k), s_{-i}(k)\right)\right].
		\end{split}
		\end{gather}
Also, the average regret in~\eqref{eps:RM_eq2} and~\eqref{eps:RM_eq_recur1} exploits the history of all past selections. This is not desirable due to the fact that the tracking accuracy gains slightly change in time due to the 
aforementioned effects. Thus, to compute the average regret, each radar should exponentially discount the influence of its past selections. Specifically, similarly to \cite{4784355_Maskery_Krishna}, we rewrite the average regret recursion as:
		\begin{gather}\label{eps:RM_eq_recur2}
		\begin{split}
		D_{i, k}(s_i, s_i^{\prime})=& D_{i, k-1}(s_i, s_i^{\prime})  \\
		& +  \theta_k \Big[u_i \left(s_i^{\prime}(k), s_{-i}(k)\right) - u_i \left(s_i(k), s_{-i}(k)\right) \\& \hspace{3.8cm} - D_{i, k-1}(s_i, s_i^{\prime})\Big].
		\end{split}
		\end{gather}
where $\theta_k$ is a positive step-size. In case where the step-size $\theta_k$ is decreasing with time, the algorithm will converge with probability 1 to the correlated equilibria of a static game. In fact, if $\theta_k={1\over k} $, then the recursions in~\eqref{eps:RM_eq_recur1} and~\eqref{eps:RM_eq_recur2} are identical; thus, the convergence arguments from~\cite{Hart_paper_a_simple} directly apply. However, for the decreasing step-size, the algorithm may not adapt to the changes caused by the target dynamics. On the other hand, with the fixed step-size $\theta_k=\theta$, the algorithm is able to adapt to the changes and can be proved to converge to the set of CE by using the arguments from stochastic averaging theory~\cite{kushner2003stochastic}. For a more detailed discussion, see \cite{4784355_Maskery_Krishna}.

Finally, we provide the algorithm based on regret-matching. 

	\noindent
	\rule{\linewidth}{0.5mm} \\[-0.5mm]
	\textbf{Algorithm 2: Regret-matching  distributed scheme (RM) for track selection}\\[-2mm]
	\rule{\linewidth}{0.5mm}
	\begin{itemize}

		\item 	
        Start with some initial probability vector $\pi^{1} (s)$

		\item At each time instant $k=1,2, \ldots,$ each radar $i \in \mathcal{N}$ performs the following steps:
%

		\begin{itemize}
			\item[s1)] Select target(s) according to probabilities $\pi^{k}_i(s_i^{\prime})$ and $\pi^{k}_i(s_i)$ $\forall s_i^{\prime}\neq s_i$, and denote the selection by $s_i$.
			\item[s2)] Calculate $D_{i, k}(s_i, s_i^{\prime})$	using~\eqref{eps:RM_eq_recur2}. 
			\smallskip
			\item[s3)] Calculate regret $R_{i, k}(s_i, s_i^{\prime})$ using~\eqref{eps:RM_eq1}.
			\item[s4)] 	Find probabilities for the following time instant, 
			i.e., $\pi^{k+1}_i(s_i^{\prime})$ and $\pi^{k+1}_i(s_i)$ using~\eqref{eps:RM_eq_probabilities}.
		\end{itemize}
	\end{itemize}
	\vspace{-0.25cm}
	\rule{\linewidth}{0.5mm}\\[-2mm]

\textit{Computational complexity}:
It is of interest to comment on the computational complexity of the distributed algorithms proposed in this paper. %
For illustration only, let us consider that the number of observed targets is the same for all radars, i.e., $\vert \mathcal{T}_i \vert= \vert \mathcal{T}_n \vert, \forall i,n \in \mathcal{N}$, and that the radars are interested in all targets that are observable to them.  
Then, the RM-based algorithm has the complexity that is linear in the number of radars but exponential in $m$,  i.e., $\mathcal{O}(N\cdot\vert \mathcal{T}_i \vert^m)$. This is in contrast to the centralized approach that can be realized by an exhaustive search and which has the exponential complexity also in the number of radars, i.e.,  $\mathcal{O}(\vert \mathcal{T}_i \vert^{N\cdot m})$. On the other hand, note that the LC-BRD proposed in Sec.~\ref{coordination} is the most efficient from the computational perspective; its complexity is in the order of   $\mathcal{O}(N\cdot m\cdot\vert \mathcal{T}_i \vert)$.

\section{Simulations}
\label{simulat}

In this section, we provide some computer simulations that verify the main findings and demonstrate the effectiveness of the proposed algorithms. 


Firstly, we consider an MFR network of 
$N=3$ radars, each of them making $m=2$  measurements per scan and aiming at tracking 
$T=5$ targets, see Fig.~\ref{fig:radar_network_2_3}. Specifically, the coordinates of radars are given in Table~\ref{my-label_table1}. 
%
The targets follow white noise constant velocity trajectories with initial  $x$, $y$-coordinates and velocities provided in Table~\ref{my-label_table2}. 
%
%
%
Initial guesses $\mathrm{x}_{j,0\vert0}$ are noisy versions of the initial states $\mathrm{x}_{j,0}$ and initial covariances are equal to $P_{j,0\vert0}=P_{0\vert0}=\mathrm{diag}\big\{(0.1\,km)^2,\break (0.1\,km)^2, (0.1\,km/s)^2, (0.1\,km/s)^2\big\}$. 
%
%
The update time is $t_u=0.25\,s$, and in order to model moderate maneuverability,  $\sigma^2_w$ is set to $2.5 \cdot 10^{-5} \,km^2/{s^3}$. 
%
Also, the standard deviation in azimuth is $\sigma^{(i)}_{a_j}=\sigma_a=2 \,mrad$, while the range accuracy varies among the radars and targets as $\sigma^{(i)}_{r_j}=b_{i,j}\cdot \sigma_r$, where $\sigma_r=15\,m$ and coefficient $b_{i,j}$ is taken from the interval $[1, 4.5]$.

	\begin{table}[t]
		\centering
		\caption{Radar positions}
		\label{my-label_table1}
	\small
		\begin{tabular}{l|c|c|ll}
			& \multicolumn{1}{l|}{$x$ {[}km{]}} & \multicolumn{1}{l|}{$y$ {[}km{]}} &  &  \\ \cline{1-3}
			radar 1 & -10                                     & 0                                       &  &  \\ \cline{1-3}
			radar 2 & 3                                       & 0                                       &  &  \\ \cline{1-3}
			radar 3 & 10                                      & 0                                       &  &  \\ \cline{1-3}
		\end{tabular}
	\end{table}

\begin{table}[h]
	\centering
	\caption{Target parameters}
	\label{my-label_table2}
	\small
	\begin{tabular}{l|c|c|c|c|}
		& \multicolumn{1}{l|}{$x$ {[}km{]}} & \multicolumn{1}{l|}{$y$ {[}km{]}} & \multicolumn{1}{l|}{$v_{x}$ {[}km/s{]}} & \multicolumn{1}{l|}{ $v_{y}$ {[}km/s{]}} \\ \hline
		Target 1 & 1                                      & 6                                      & 0.5                                  & 0.1                                  \\ \hline
		Target 2 & 0.5                                    & 7                                      & 0.35                                 & -0.1                                 \\ \hline
		Target 3 & 1.5                                    & 3                                      & -0.3                                 & 0                                    \\ \hline
		Target 4 & 2                                      & 4                                      & -0.2                                 & 0.1                                  \\ \hline
		Target 5 & 2.5                                    & 5                                      & 0.3                                  & 0.2                                  \\ \hline
	\end{tabular}
\end{table}

\begin{figure}[h!]
	\centering
	\includegraphics[width=0.85\linewidth]{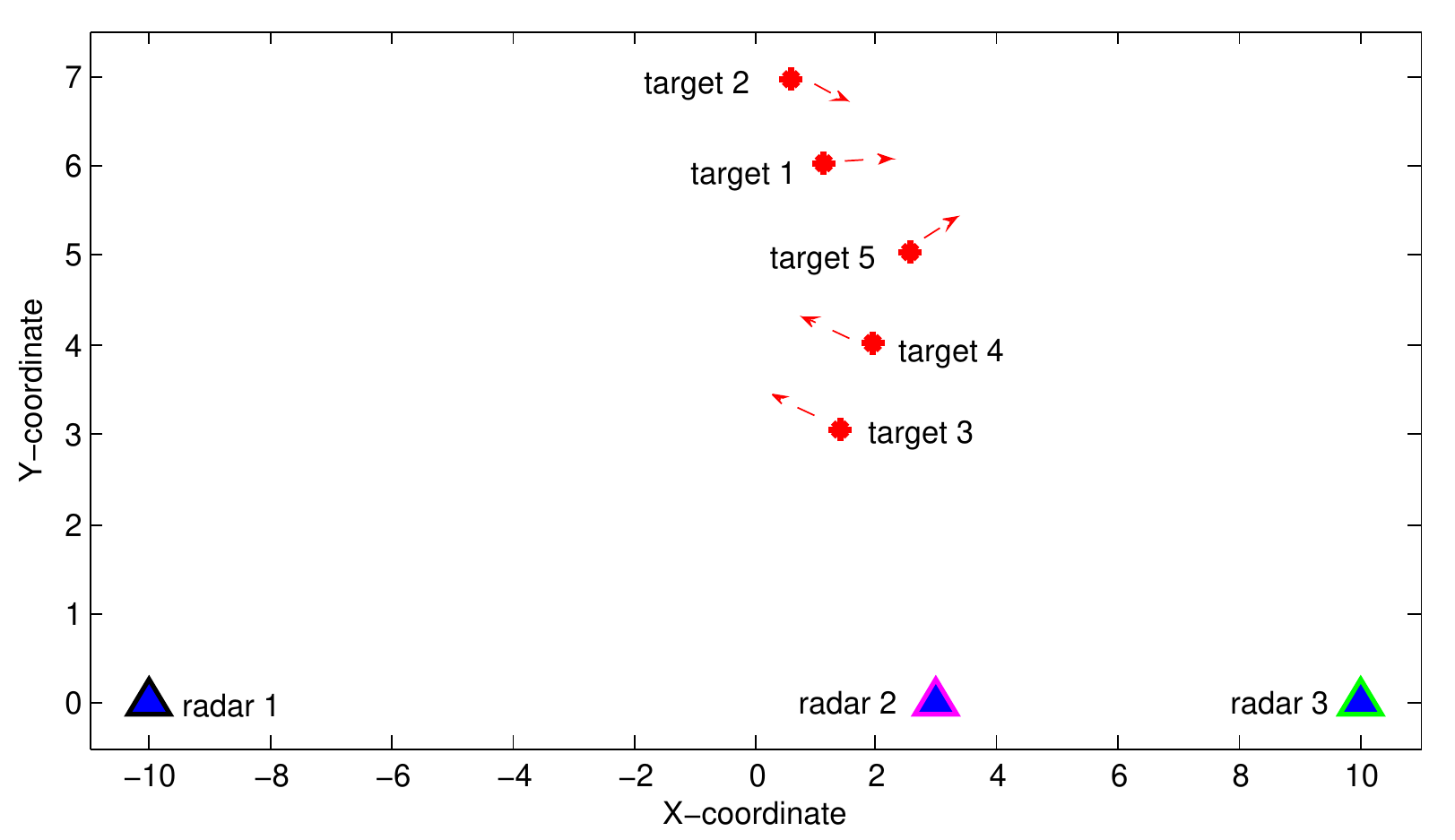}
	\caption{The coordinates of radars and targets. }
	\label{fig:radar_network_2_3}
\end{figure}

Most figures present the weighted sum of 
$\mathrm{Tr} \{P_{j, k\vert k}^{(m^t_j(i))} \}$ over all targets and over  all radars, i.e., 
\begin{gather}\label{eps:simul_eq1}
\begin{split}
\sum_{i=1}^{N} \sum_{j=1}^{T}  w_{i,j} \cdot \mathrm{Tr} \left\{P_{j, k\vert k}^{(m^t_j(i))} \right\},
\end{split}
\end{gather}
%
as a function of time $k$. Initially, we focus on the case analyzed in Sec.~\ref{coordination} %
%
%
and %
compare the following strategies:
\begin{itemize}
\item[a)] \textit{Standalone} -- The standalone radar that does not send/receive measurements. It sequentially chooses $m=2$ different targets each time instant.
\item[b)] \textit{Distributed random with $K=10$} -- Distributed strategy where the radars exchange the measurements while each of them randomly selects targets each $K=10$ time instants.
\item[c)] \textit{Distributed random with $K=1$} -- Same as in (b), except that targets are being randomly chosen at each time instant, i.e., $K=1$.
\item[d)] \textit{Proposed LC-BRD distributed with $K=10$} -- The proposed low-complexity BRD-based distributed algorithm seeking NE while being reinitialized every $K=10$ time instants. The probability $\alpha$ is set to the value of $0.5$.
\item[e)] \textit{Approximated centralized with $K=10$} -- The approximated centralized approach based on analytically resolved 
measurements-to-target allocation every $K=10$ time instants. Due to its exponential search complexity in the total number of measurements, i.e., $\mathcal{O}(T^{N \cdot m})$, the centralized exhaustive search is computationally challenging even for the considered scenario of $T= 5$, $N=3$ and $m=2$. 
For this reason, the coefficients in noise variances  $\sigma^{(i)}_{r_j}$, which are in the interval $[1, 4.5]$, are set in such a way that the best centralized measurements-to-target allocations can be easily analytically determined and changed every $K=10$ time instants.\footnote{This is only done for the purpose of a comparison. In the scenarios with limited observability, and thus less computational complexity,  we will provide the exhaustive search results as a benchmark.}
\end{itemize}

\begin{figure}
	\begin{minipage}[b]{1.0\linewidth}
		\centering
		\centerline{\includegraphics[width=7.7cm]{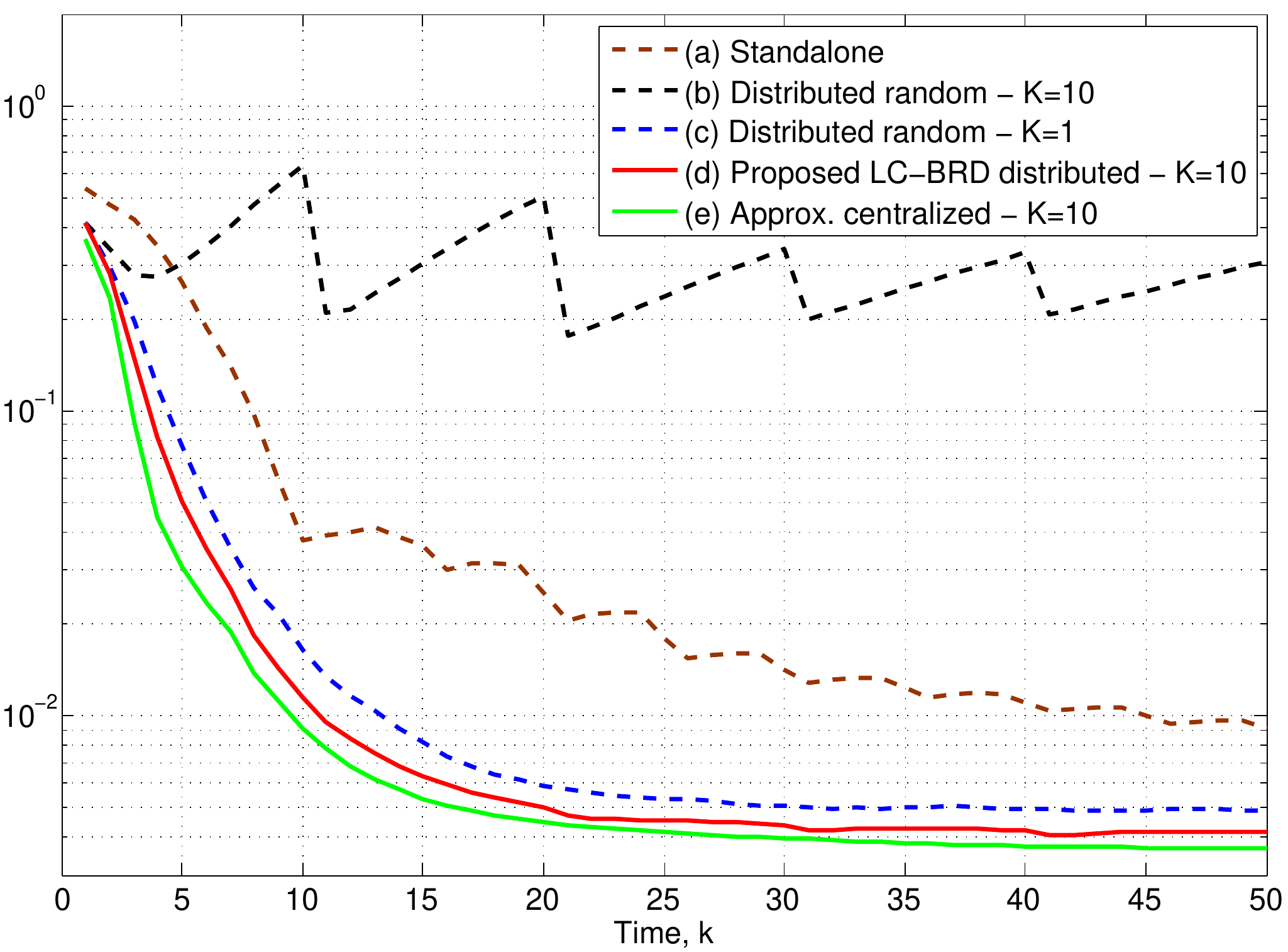}}
		
		\caption{Sum of traces of error covariance matrices for all targets during time in the setting with $T=5$ targets, $N=3$ radars, $m=2$  measurements per scan  and update time $t_u=0.25\,s$.}
		\label{fig:Results_for_P}
	\end{minipage}
\end{figure}

\begin{figure}
	\begin{minipage}[b]{1.0\linewidth}
		\centering
		\centerline{\includegraphics[width=7.7cm]{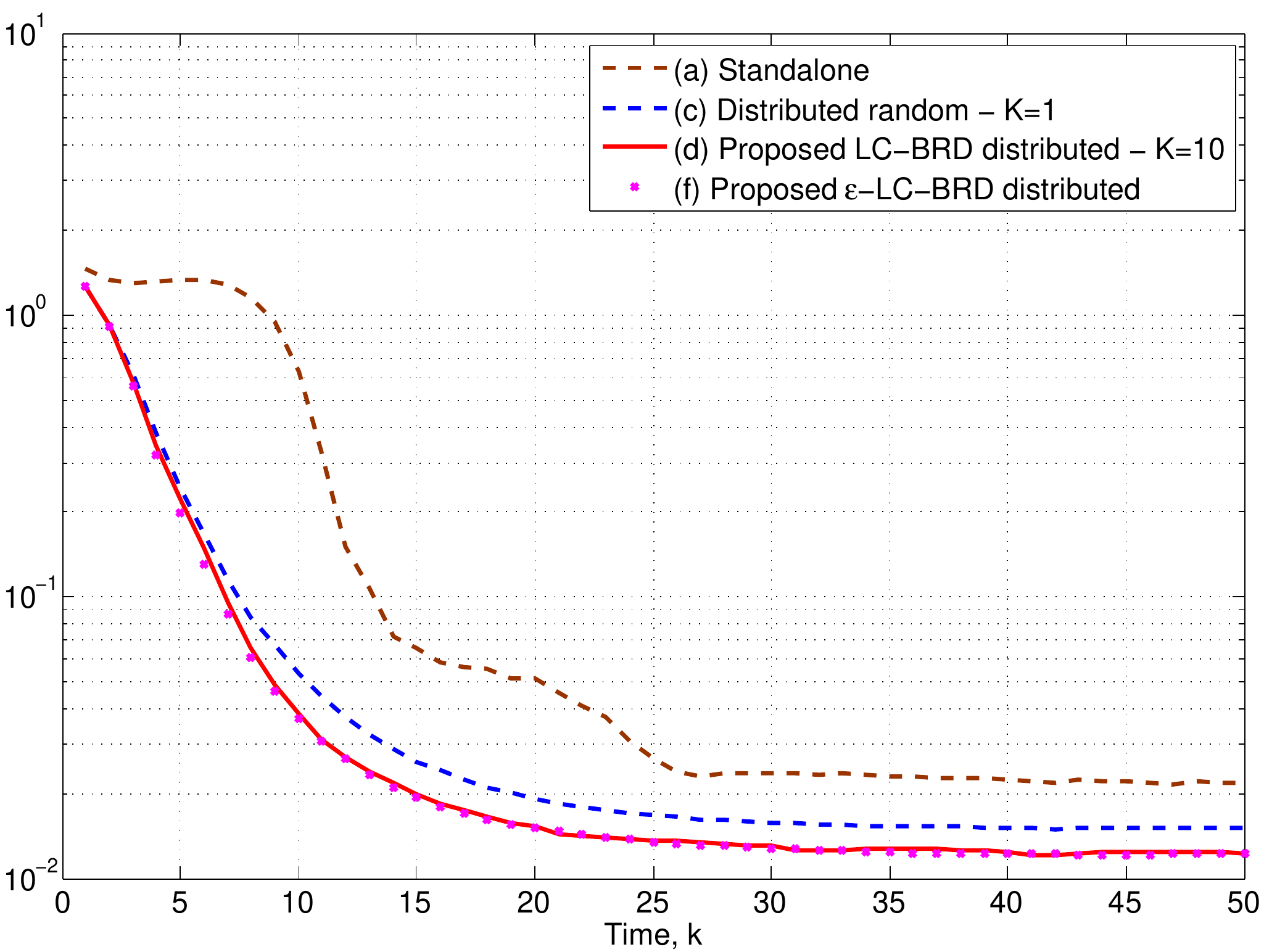}}
		
		\caption{A setting with $T=15$ targets, $N=3$ radars, $m=6$  measurements per scan  and update time $t_u=0.25\,s$.}
		\label{fig:Many_targets_many_m}
	\end{minipage}
\end{figure}

Figure~\ref{fig:Results_for_P} compares the above strategies. 
The results are averaged over 100 
realizations. 
Not surprisingly, due to the high process' dynamics, a standalone, non-cooperative radar experiences weak performance since it utilizes only its own measurements which are not sufficient to cover all targets. Although approach in (b) uses $N \cdot m =6$ measurements, due to the lack of coordination it performs poorly. However,  the distributed random strategy can be significantly improved if strategies are constantly being changed, given that there are no track migration costs involved. Note that the proposed LC-BRD distributed algorithm, which learns underlying NE allocations, outperforms the aforementioned strategies. On the other hand, it  closely approaches the performance of the approximated centralized one while 
mitigating its inherent complexity. 

The results related to a setting with more targets and measurements per scan than in the previous setting are plotted in Fig.~\ref{fig:Many_targets_many_m}. Here, we also include:
\begin{itemize}
	\item[f)] \textit{$\epsilon$-LC-BRD distributed} -- The proposed low-complexity BRD-based distributed algorithm seeking NE and where each radar may change its strategy (even if in an NE) with a small $\epsilon$ probability. The probabilities $\alpha$ and $\epsilon$ are chosen to be $0.5$ and $0.02$, respectively. 
\end{itemize}
Although without the curve for approximated centralized solution as a benchmark, the results that Figure~\ref{fig:Many_targets_many_m} provides are similar to those in Fig.~\ref{fig:Results_for_P}. Also, note that the two versions of the proposed LC-BRD algorithm exhibit pretty similar performance.

For the plots in Figs.~\ref{fig:7}-\ref{fig:mogucnost_8},  two additional radars in the network are considered w.r.t. the setting in Fig.~\ref{fig:radar_network_2_3}, i.e., $(x_4,y_4)=[-4\,km$, $0\,km]$, 	$(x_5,y_5)=[7\,km$, $0\,km]$.
This is probably the least favorable scenario for LC-BRD w.r.t. the distributed totally random ($K=1$) algorithm due to the fact that now $mod(N\cdot m, \,T)=0$, i.e., all targets can be selected with the same number of measurements. In Figs.~\ref{fig:7}-\ref{fig:mogucnost_8}, the update time is set to $t_u=0.25\,s$ and $t_u=0.025\,s$, respectively. Also, in Fig.~\ref{fig:mogucnost_8} the following strategy is used in the comparison:  
\begin{itemize}
	\item[g)] \textit{RM distributed} -- The proposed distributed algorithm based on Regret-Matching which 
	tracks CE, with $\theta_k=0.5$.
\end{itemize}
It can be noticed that the RM distributed algorithm clearly outperforms other distributed strategies. This is due to its more sophisticated learning mechanism, 
which comes at the expense of somewhat higher computational complexity than the other distributed strategies.

			\begin{figure}[!t]
				\centering 

					\includegraphics[width=0.44\textwidth]{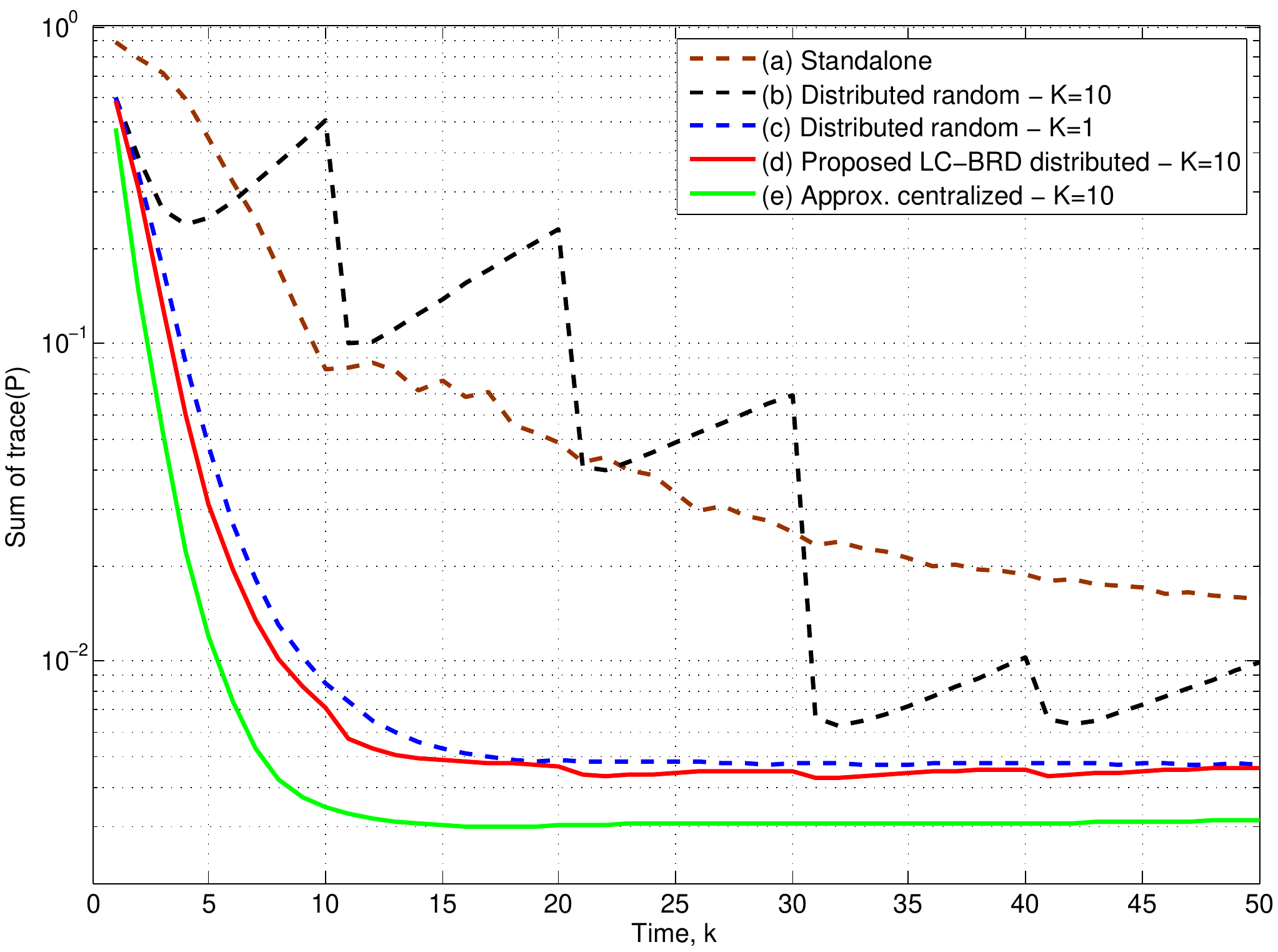}

				\caption{ 
					A setting with $T=5$ targets, $N=5$ radars, $m=2$ and update time $t_u=0.25\,s$. 
					}
				\label{fig:7}
			\end{figure}

			\begin{figure}[t!]
				\centering
				\centerline{\includegraphics[width=7.cm]{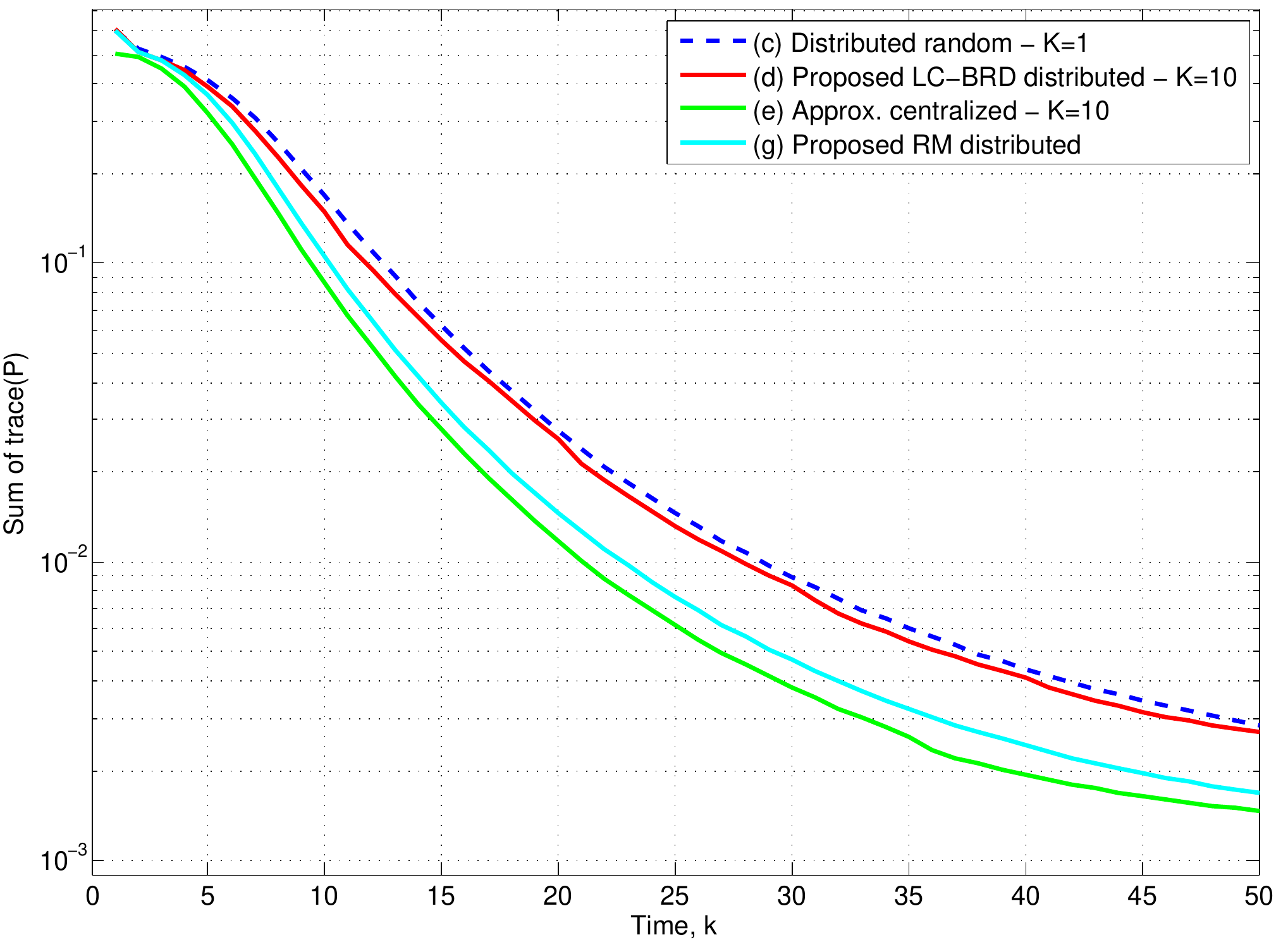}}
				\caption{Scenario is same as in Fig. \ref{fig:7} except that $t_u=0.025\,s$ and the comparison is made with the proposed distributed algorithm based on Regret-Matching.}
				\label{fig:mogucnost_8}
			\end{figure}

		\begin{figure}[h!]
			\centering
			\centerline{\includegraphics[width=6cm]{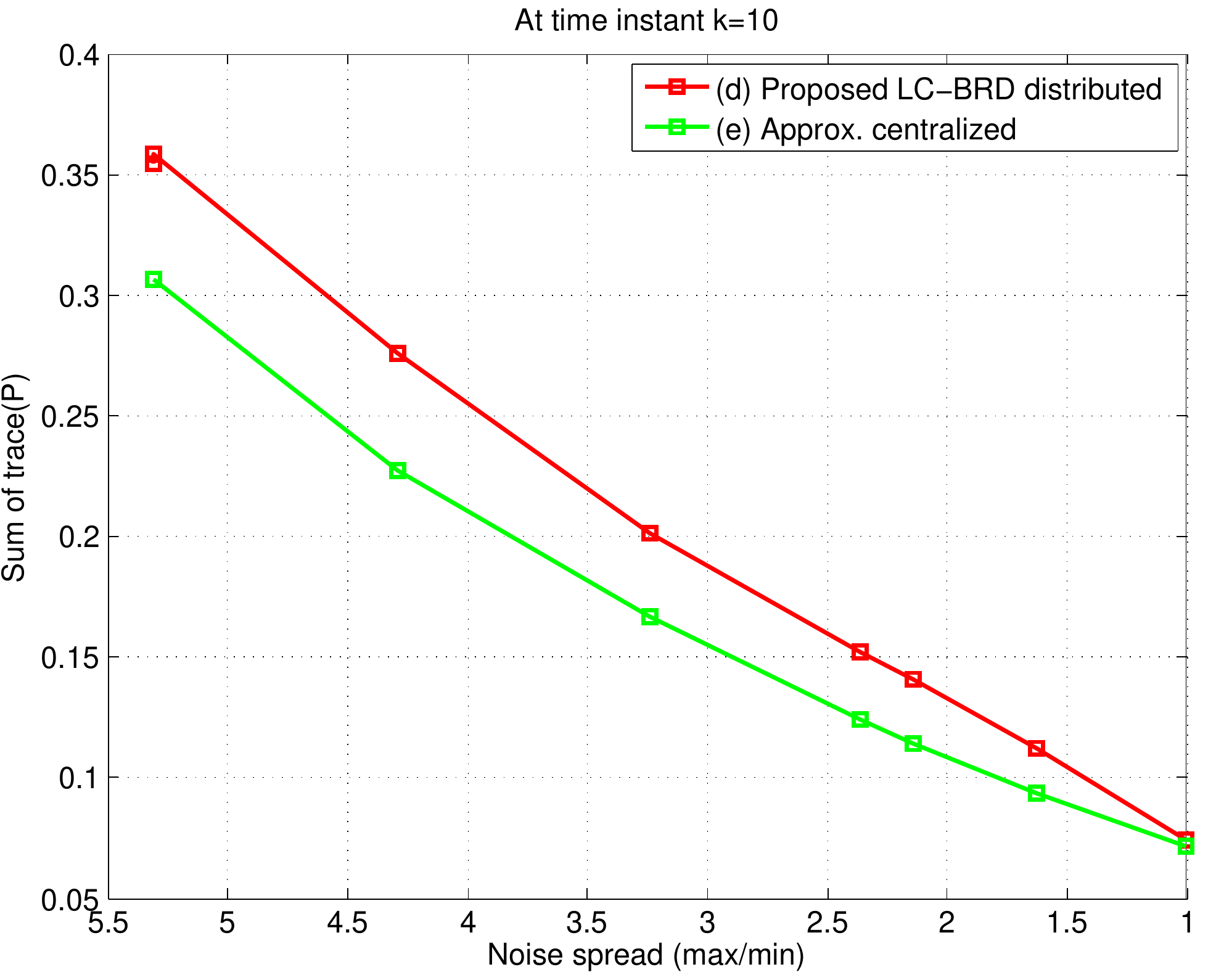}}
			\caption{Performance comparison as a function of the measurements diversity in terms of the noise variance spread, 
				for the same setting as in Fig. \ref{fig:mogucnost_8}.}
			\label{fig:diversity9}
		\end{figure}

		\begin{figure}[h!]
			\centering
								\centerline{\includegraphics[width=6cm]{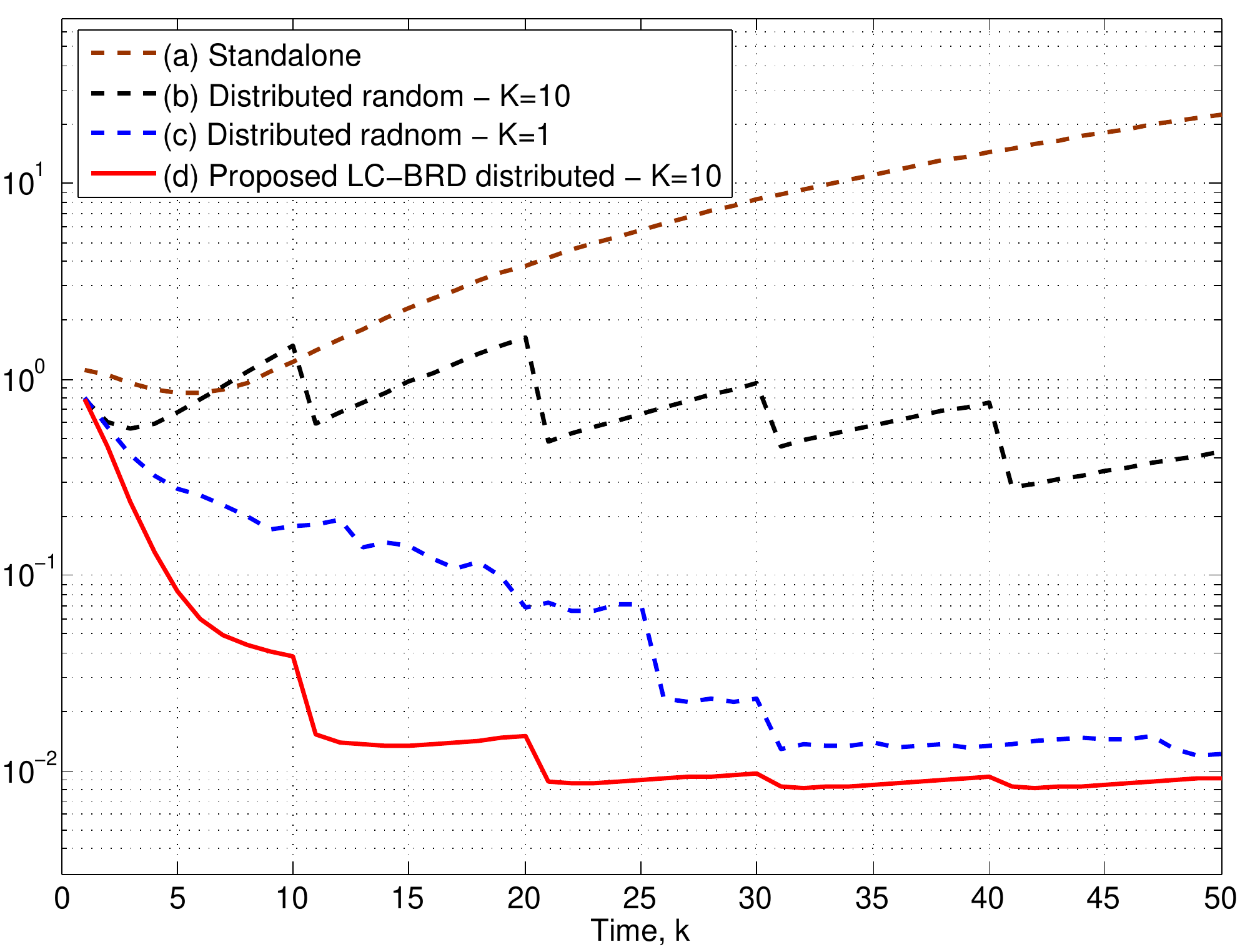}}
			\caption{Scenario with    $ 3\leq \vert\mathcal{T}_i\vert\leq 5 $, $ 5\leq \vert\mathcal{N}_i\vert\leq 6 $ and  $W=\mathrm{1}_N \cdot \mathrm{1}^T_{\vert\mathcal{T}\vert}$, where it holds $\bigcup_{i \in \mathcal{N}_i}\mathcal{T}_i =\mathcal{T}$, $\forall i \in \mathcal{N}$.}
			\label{fig:1_radar_1_target}
		\end{figure}

Regarding the LC-BRD algorithm, it should be mentioned that its performance difference w.r.t. the centralized approach mainly depends on the measurement diversity, as suggested by Props.\ref{b_slucaj}-\ref{less_case_c} in Sec.~\ref{coordination}. Specifically, the more similar measurements' quality is, the  gap w.r.t. the centralized approach is smaller ($PoA \rightarrow 1)$. This is 
 illustrated in Fig.~\ref{fig:diversity9} by simulating the performance of the LC-BRD and the approximated centralized  solution as a function of the noise variance spread in the network over all targets, i.e.,  $\textrm{max}_{i,j}( \sigma^{(i)}_{r_j}) / \textrm{min}_{i,j}( \sigma^{(i)}_{r_j}) $. 

So far, the focus has been on the scenarios where all radars had the same interests, full observability and the radar network was fully connected. Now, let us remove these restrictions  by firstly analyzing the scenario where the observability of each radar varies between $3$ and $5$ and the radar connectivity for some radars is not full while all radars are still interested in all targets. We set $T=5$ targets, $N=6$ radars, $m=1$  measurements per scan  and update time $t_u=0.25\,s$. Note that the compared algorithms were modified accordingly in order to take into account the above constraints. As it can be seen in Fig.~\ref{fig:1_radar_1_target}, the LC-BRD algorithm clearly outperforms the distributed random one ($K=1$) due to the fact that the equally balanced target allocations (from a single radar perspective) are not necessary reasonable, in contrast to the scenario in Figs.~\ref{fig:7}-\ref{fig:mogucnost_8}.

			\begin{figure}
				\centering 

				\subfigure[Interest: All targets]{
					\includegraphics[width=0.34\textwidth]{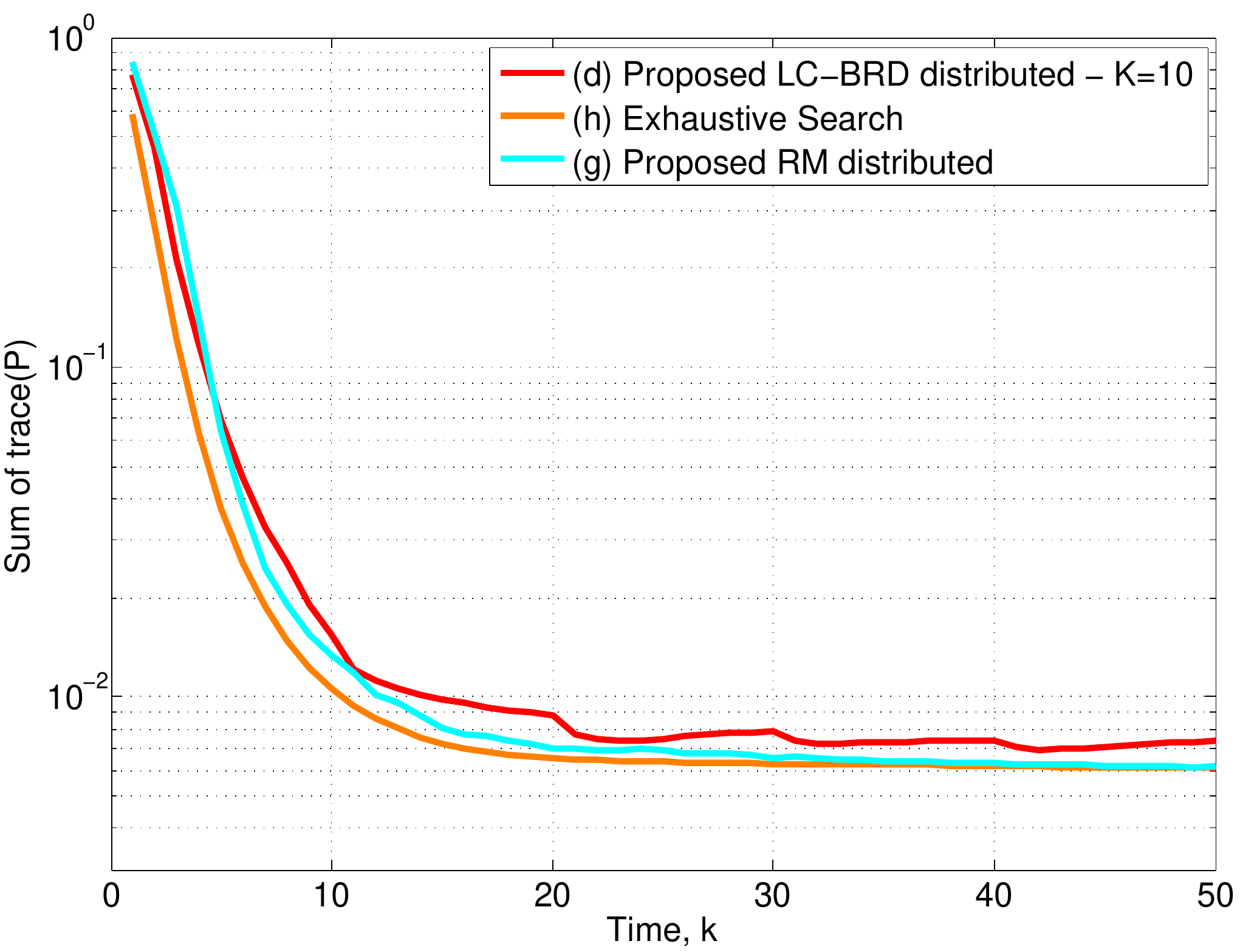}
					\label{fig:Exhaustive_RM_BRDa}
				} \hspace{0.05cm}
				\subfigure[Interest: 4 targets]{
					\includegraphics[width=0.34\textwidth]{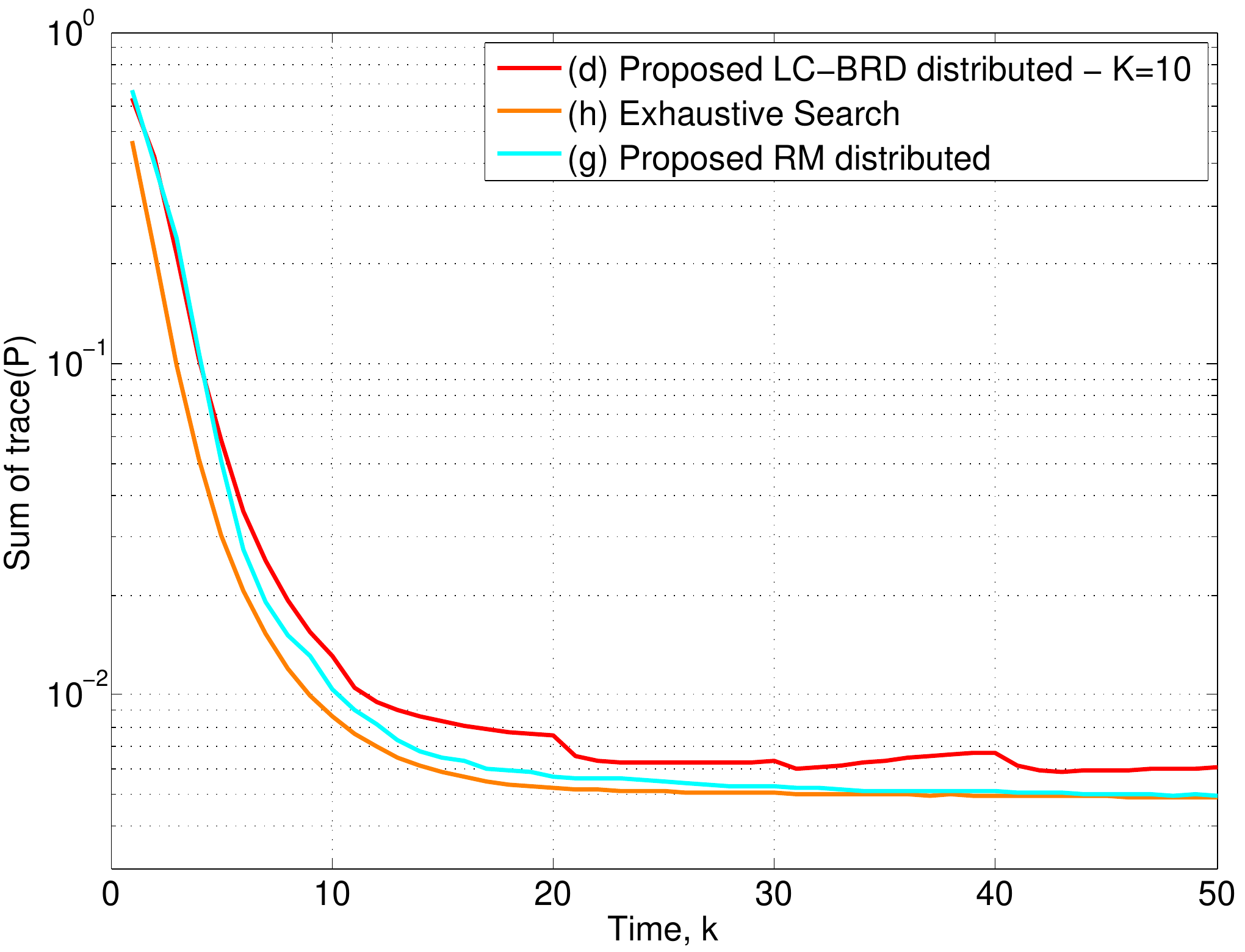}
																
					\label{fig:Exhaustive_RM_BRDb}
				}
				\caption{Scenario with  $  \vert\mathcal{T}_i\vert= 4 $ and $ \mathcal{N}_i= \mathcal{N}$, with $T=5$, $N=6$, $m=1$  and $t_u=0.25\,s$. 
					The case where all radars have the same interests (all targets) is in~\subref{fig:Exhaustive_RM_BRDa}, and the case where their interests differ in general is  in~\subref{fig:Exhaustive_RM_BRDb}.}
				\label{fig:Exhaustive_RM_BRD}
			\end{figure}

Finally, we compare the proposed strategies with the centralized solution based on exhaustive search, i.e.,
\begin{itemize}
	\item[h)] \textit{Exhaustive Search} - The centralized search is implemented with full knowledge of all radars' interests, observability and connectivity conditions and at each time instant the best allocation optimizing the sum of all radars' utilities is selected. 
\end{itemize}
Figure~\ref{fig:Exhaustive_RM_BRD} shows that the LC-BRD performs well given its complexity. Note also that for the case where the interests of the radars are not necessary the same four targets, there are no theoretical guarantees that the NE exist(s) nor that a BRD-based algorithm may achieve an NE point; however, the LC-BRD still preforms relatively well. On the other hand, the RM-based algorithm, which is designed for more general scenarios, performs better than the LC-BRD and it closely approaches the centralized, exhaustive search solution.

\section{Conclusions}
\label{sec:conc}


In this article, we have proposed a new formulation of the track selection problem for a multi-target tracking scenario in an MFR network using the non-cooperative games. The target selections of each radar are considered to be autonomous; there is no central entity to tell radars what to do nor is there any negotiation process among the radars. We have analyzed two indicative scenarios with equal and heterogeneous conditions of observability and connectivity as well as radar interests. In the former scenario, the  Nash equilibria of the underlying anti-coordination games have been analyzed and 
a simple yet effective distributed algorithm %
that introduces a balancing effect in track selections 
has been proposed. Afterwards, for a more demanding scenario, the solution concept of correlated equilibria has been employed and a more sophisticated, distributed algorithm based on the regret-matching has been  proposed. 
Finally, computer simulations have verified that both proposed algorithms closely approximate the centralized solution while mitigating its inherent complexity.

\bibliographystyle{IEEEtran}
\bibliography{IEEEabrv,./Nikola2_radar_journ}

\end{document}